\newcommand*{\bibsubfile}[2]{
\bibliographystyle{#1}
\bibliography{#2}
}
\newcommand{\NN}{\mathcal{N}}
\newcommand{\PP}{\mathcal{P}}
\newcommand{\cS}{\mathcal{S}}
\newcommand{\cP}{\mathcal{P}}
\newcommand{\cF}{\mathcal{F}}
\newcommand{\cI}{\mathcal{I}}
\newcommand{\fS}{\boldsymbol{\mathcal{S}}}
\newcommand{\fA}{\boldsymbol{\mathcal{A}}}
\newcommand{\fP}{\boldsymbol{\mathcal{P}}}
\newcommand{\R}{\mathbb{R}}
\newcommand{\E}{\mathbb{E}}
\newcommand{\size}{\textsf{size}}
\newcommand{\ExpP}{\textsf{ExpP}}
\newcommand{\PolyP}{\textsf{PolyP}}
\newcommand{\OPT}{\textsf{OPT}}
\NewDocumentCommand\NewPrefixPostfixDelimiter{mmmm}{%
	\NewDocumentCommand#1{om}{%
		\IfNoValueTF{##1}
		{#2#3{##2}#4}
		{#2#3[##1]{##2}#4}%
	}%
}
\NewDocumentCommand\NewPairedDelimiter{mmm}{%
	\NewDocumentCommand#1{som}{%
		\IfNoValueTF{##2}{%
			\IfBooleanTF{##1}
			{#2##3#3}
			{\mleft#2##3\mright#3}%
		}
		{\mathopen{##2#2}##3\mathclose{##2#3}}%
	}%
}
\NewPairedDelimiter{\bb}{(}{)}
\NewPairedDelimiter{\bB}{(}{]}
\NewPairedDelimiter{\Bb}{[}{)}
\NewPrefixPostfixDelimiter{\BigOh}{O}{\bb}{}
\title{Correlated Stochastic Knapsack with a Submodular Objective} 
\titlerunning{Correlated Stochastic Knapsack with a Submodular Objective} 
\author{Sheng Yang}{Northwestern University}{styang@fastmail.com}{https://orcid.org/0000-0002-5884-4893}{}
\author{Samir Khuller}{Northwestern University}{samir.khuller@northwestern.edu}{}{}
\author{Sunav Choudhary}{Adobe Research}{schoudha@adobe.com}{}{}
\author{Subrata Mitra}{Adobe Research}{subrata.mitra@adobe.com}{}{}
\author{Kanak Mahadik}{Adobe Research}{mahadik@adobe.com}{}{}
\authorrunning{S. Yang, S. Khuller, S. Choudhary, S. Mitra, K. Mahadik} 
\keywords{Stochastic Knapsack, Submodular Optimization, Stochastic Optimization} 
\begin{document}

\maketitle

\begin{abstract}
  \label{sec:abstract}
  We study the correlated stochastic knapsack problem of a submodular target function, with optional additional constraints.
  We utilize the multilinear extension of submodular function, and bundle it with an adaptation of the relaxed linear constraints from Ma [Mathematics of Operations Research, Volume 43(3), 2018] on correlated stochastic knapsack problem. The relaxation is then solved by the stochastic continuous greedy algorithm, and rounded by a novel method to fit the contention resolution scheme (Feldman et al. [FOCS 2011]). We obtain a pseudo-polynomial time \((1 - 1/\sqrt{e})/2 \simeq 0.1967\) approximation algorithm with or without those additional constraints, eliminating the need of a key assumption and improving on the \((1 - 1/\sqrt[4]{e})/2 \simeq 0.1106\) approximation by Fukunaga et al. [AAAI 2019].

\end{abstract}


\pagebreak
\setcounter{page}{1}

\renewcommand{\bibsubfile}[2]{}

\section{Introduction}
\label{sec:intro}

The knapsack problem is one of the most celebrated frameworks to model profit maximization with limited resources. Though well understood in its basic form, many new variants were proposed to model and target more complicated problems.
One line of variants take randomness into consideration.
Such randomness may appear on item sizes only, on item profits only, or on both in a correlated fashion. A significant body of work~\cite{bhalgat2011improved,dean2005adaptivity,gupta2011approximation,li2013stochastic,ma2018improvements} connects knapsack problem with the field of stochastic optimization, greatly broadening the spectrum of knapsack problems while introducing various challenges for theoretical analysis.
Another line of variants model diminishing returns in the profit, leading to the field of submodular optimization~\cite{nemhauser1978analysis,fisher1978analysis,calinescu2011maximizing,chekuri2009dependent,feldman2013maximization,feldman2011unified},  which enjoys tremendous popularity both in theory and in practice.
The two lines of work are connected together into stochastic submodular optimization, another fruitful field~\cite{asadpour2008stochastic,golovin2011adaptive,chen2013near,fujii2016budgeted,gabillon2013adaptive,gabillon2014large,gotovos2015nonmonotone,yu2016linear}.
In this work, we follow this line, and consider a correlated stochastic knapsack problem with a submodular target function. We arrived at this problem when modeling the spot scheduling problem in Yang et al.~\cite{yang2021scheduling} (see details in \Cref{sec:spot_problem}). A slight variant of the final problem was first proposed in Fukunaga et al.~\cite{fukunaga2019stochastic}, trying to model ``performance-dependent costs of items'' in stochastic submodular optimization. This problem turns out to be a very powerful framework that applies to several other real world applications, like recommendation systems~\cite{yue2011linear,ahmed2012fair}, and batch-mode active learning~\cite{hoi2006batch}.

\subsection{Formal Problem Statement}
\label{sec:spot_final_problem}
There are \(n\) items, each takes a random \(\size_{i}\in \mathbb{N}\) with probability \(p_{i}(\size_{i})\), and gets a reward that \emph{corresponds} to its size.  In other words, for each item \(i\), there is a reward function \(R_{i}: \mathbb{N}\rightarrow [M] \), such that \(r_{i} = R_{i}(\size_{i})\). (For simplicity, we define \([n]\) to be the set \(\{0, 1, 2, \dots, n\}\), and \(M\) a positive integer that upper bounds the maximum reward.) We assume each \(R_{i}\) to be non-decreasing, i.e., the larger an item, the more reward it deserves. We are given a budget \(B \in \mathbb{N}\) for the total size of items, and wish to extract as much profit as possible. The total profit is a lattice-submodular function\footnote{See definition of partition matroid, submodular and lattice-submodular in \Cref{sec:formulation}.} \(f: [M]^{n} \rightarrow \mathbb{R}^{ + }\) on the rewards of included items, and we wish to maximize its expectation\footnote{%
Let \(S \subseteq [n]\), we sample a vector \(q\in [M]^{n}\) as follows. Each component \(q(i)\) is sampled independently.  For \(i\in S\), \(\Pr[r_{i} = R_{i}(s)] = p_{i}(s)\); for \(i \notin S\), \(r_{i} = 0\) with probability 1. Denote this distribution as \(q_{S}\). Then the objective is to select a (random) set \(S\subseteq \mathcal{I}\) of items that maximizes \(\E_{\theta \sim q_{S}}[f(\theta)]\) subject to \(\sum_{i\in S} \size_{i} \leq B\).}.

Items are put in the knapsack one by one. As soon as an item is put in the knapsack, its reward and size are revealed. We halt when the knapsack overflows (not collecting the last item's reward), and proceed to add another item otherwise. We consider adaptive policies, i.e., we can choose an item to include, observe its realized size, and make further decisions based on the realized size. At first, only the reward function and the size distribution of items are known. When the policy includes an item \(i\), its \(\size_{i}\) is realized, and so is its reward \(r_{i} = R_{i}(\size_{i})\). In this work, we only consider adaptive policies without cancellation, i.e., the policy can make its decision based on all the realizations it has seen so far, and the inclusion of an item is irrevocable.

For a vector \(q\in [M]^{n}\), let \(\Pr_{\gamma}[q]\) denote the probability that we get outcome \(q\) when running policy \(\gamma\). Note this probability is with respect to the randomness both in the state of items and in the policy \(\gamma\). Let \(f_{\textsf{avg}}(\gamma)\) denote \(\sum_{q\in [M]^{n}} \Pr_{\gamma}[q] f(q)\), i.e., the average objective value obtained by \(\gamma\). Our aim is to find a policy \(\gamma\) that maximizes \(f_{\textsf{avg}}(\gamma)\). We say \(\gamma\) is an \(\alpha\)-approximation policy if \(f_{\textsf{avg}} (\gamma) \geq \alpha f_{\text{avg}}(\gamma^{*})\) for any policy \(\gamma^{ * }\).

In addition to all the above, we further require that the chosen set of items \(S\) be an independent set of a partition matroid\footnote{\label{fn:def}See definition of partition matroid, submodular and lattice-submodular in \Cref{sec:formulation}.} \(\mathcal{I} = \{\cI_{k}\}_{k\in [K]}\).
This is without loss of generality as we can put each item in a separate partition, and every subset of items is valid.
The additional constraint allows us to impose conflicts between items, which is needed for the modeling in Yang et al.~\cite{yang2021scheduling}.
More importantly, it is also crucial if we are to allow the attempt to include an item that could possibly overflow the knapsack, a case unsolved and left as open problem in Fukunaga et al.~\cite{fukunaga2019stochastic} (see details in \Cref{sec:spot_eliminate}). This partition matroid is also used to ensure the correctness of our approach based on a time-indexed LP.

\subsection{Our Contributions}
We present a pseudo-polynomial time algorithm for the correlated stochastic knapsack problem with a submodular target function. It computes an adaptive policy for this problem which is guaranteed to achieve \((1 - 1/\sqrt{e})/2 \simeq 0.1967\) of the optimal solution on expectation. It improves on the \((1 - 1/\sqrt[4]{e})/2 \simeq 0.1106\) approximation algorithm from Fukunaga et al.~\cite{fukunaga2019stochastic}. Furthermore, we eliminate one key assumption in Fukunaga et al.~\cite{fukunaga2019stochastic} which does not allow the inclusion of any item which could possible overflow the budget.

\subsection{Eliminating An Assumption in Previous Work}
\label{sec:spot_eliminate}

In Fukunaga et al.~\cite{fukunaga2019stochastic}, the authors considered a slightly different problem.
They made two assumptions, and we managed to eliminate one of them. The first assumption states that larger size means larger reward for every particular job.  This assumption is reasonable for general problems and remains crucial in our analysis.
The second assumption states that we will never select an item which could overflow the budget, given the realization of selected items\footnote{For example, suppose we are left with a remaining budget of 20 at some time, and all items have a 0.001 probability of size 21. What this assumption suggests is that none of the items are allowed to be selected.}. However, for many cases, selecting such an item is a desirable choice since additional value is obtained with high probability. If we are unlucky and the size goes beyond the remaining budget, we either receive a partial value, or do not get any value at all.

\subsection{Our Techniques}
\label{sec:spot_technique}

If the target function is linearly additive, this problem becomes the correlated stochastic knapsack problem.
For this problem, Gupta et al.~\cite{gupta2011approximation} gave an \(1/8\) approximation algorithm for adaptive policies based on LP relaxation.
The approximation ratio was improved to \(1/(2 + \epsilon)\) by Ma~\cite{ma2018improvements}, via a different time indexed LP formulation and a more sophisticated rounding scheme.
Fukunaga et al.~\cite{fukunaga2019stochastic} extends the \(1/8\) approximation algorithm, and achieve a \((1 - 1/\sqrt[4]{e})/2\) approximation for a case with submodular target function.
This is achieved via a combination of the stochastic continuous greedy algorithm~\cite{asadpour2016maximizing} (for getting a fractional solution), and the contention resolution scheme~\cite{feldman2011unified} (for rounding).
A natural idea for improvement is to take ingredients from the \(1 / (2 + \epsilon)\) algorithm by Ma~\cite{ma2018improvements}.
While the LP can be easily adapted, its rounding exhibits complicated dependencies that can be hard to analyze.
We also have no luck with a direct application of the \emph{contention resolution scheme}\cite{chekuri2014submodular,feldman2013maximization,feldman2011unified}, a powerful technique in submodular optimization.
The difficulties come in two folds. First, the original scheme is based on FKG inequality, which requires  an \emph{independently} rounded solution (possibly invalid) to start with. Any attempt to enforce our partition matroid at this step will break the whole scheme. This invalid solution is later fixed by ignoring some items from the rounded solution, and we need a way to impose the additional partition matroid constraint. Second, the ignoring step needs a critical ``monotone'' property.
At a high level, the property says that the more items you choose, the lower the probability every other items will be selected (See \Cref{sec:CRS} for a rigours definition).
While this may seem trivially true for any reasonable algorithm, it is not. In particular, it does not hold for Ma's algorithm~\cite{ma2018improvements}, due to its complicated dependencies.
The first difficulty is not hard: if we happen to pick two items from the same partition, we just throw the later one out. Unfortunately, this makes the second obstacle even harder.
The second obstacle is overcome by designing a brand-new rounding scheme which allows the direct analysis on the correlated probability of events.
In order to achieve the aforementioned monotone property, we insert phantom items to block some ``time slots'' even when no item is there to conflict with. Such phantom items may be of independent interest for other applications of the contention resolution scheme.
This alternate way of achieving monotonicity simultaneously free our analysis from one assumption mentioned in \Cref{sec:spot_eliminate}, which was needed in Fukunaga et al.~\cite{fukunaga2019stochastic} in their proof of the monotone property.
A factor of \((1 - 1/\sqrt{e})\) is lost for the continuous optimization part, and another factor of \(2\) is lost for rounding, leading to our \((1 - 1/\sqrt{e})/2 \simeq 0.1967\) approximation algorithm.


\section{Other Related Works}
\label{sec:related_works}

\textbf{Stochastic Knapsack Problem}
\label{sec:related_stochastic}
The stochastic version of the knapsack problem has long been studied.
Kleinberg et al.~\cite{kleinberg2000allocating}, and Goel and Indyk~\cite{goel1999stochastic} consider the stochastic version to maximize profit that will overflow the budget with probability at most $p$.
However, they assume deterministic profits and special size distributions.
Dean et al.~\cite{dean2008approximating} relax the limit on size and allow arbitrary distributions for item sizes.
They investigate the gap between non-adaptive policies (the order of items to insert is fixed) and adaptive policies (allowed to make dynamic decision based on the realized size of items) and give a polynomial-time non-adaptive algorithm that approximates the optimal adaptive policy within a factor of \(1/4\) in expectation.
They also give an adaptive policy that approximates within a factor of $1/(3 + \epsilon)$ for any constant $\epsilon > 0$.
Bhalgat et al.~\cite{bhalgat2011improved} improves on this and give a bi-criteria \((1 - \epsilon)\) algorithm by relaxing the budget by \((1 + \epsilon)\).
Dean et al.~\cite{dean2005adaptivity} show that if correlation between size and reward is allowed, the problem would be PSPACE-hard.
Gupta et al.~\cite{gupta2011approximation} considered the case where the size and reward of an item can be arbitrarily correlated, and give an \(1/8\) approximation.
Li and Yuan~\cite{li2013stochastic} improved on this and get a $1/(2 + \epsilon)$ approximation with correlations and cancellation when $\epsilon$ fraction of extra space is allowed.
This was further improved by Ma~\cite{ma2018improvements}, who gets the same approximation ratio but without the budget augmentation requirement.

\textbf{Submodular Maximization}
\label{sec:related_submodular}
Nemhauser et al.~\cite{nemhauser1978analysis} studied the problem of maximizing a monotone submodular function subject to a cardinality constraint and gave the standard greedy \((1 - 1/e)\)-approximation algorithm.
For the case with a matroid constraint, Fisher et al.~\cite{fisher1978analysis} showed that the standard greedy algorithm gives a \(1/2\)-approximation. This was improved to \((1 - 1/e)\) by Calinescu et al.~\cite{calinescu2011maximizing}, via the \emph{continuous greedy algorithm}, which was originally developed by Calinescu et al.~\cite{calinescu2007maximizing} for the submodular welfare problem. In this algorithm, the target function is relaxed via an exponential multilinear-extension. Though exponential, this version can be approximately solved to arbitrary precision in polynomial time. The fractional solution is then rounded via pipage rounding~\cite{calinescu2007maximizing,calinescu2011maximizing,vondrak2009symmetry} or other rounding schemes~\cite{chekuri2009dependent}. In order to generalize the problem for other constraints and non-monotone submodular functions, a general rounding framework \emph{contention resolution scheme} was proposed~\cite{calinescu2011maximizing,feldman2013maximization,feldman2011unified}.
In this framework, the rounding step happens in two phases, an independent rounding phase followed by a pruning phase, where the second phase ensures an upper bound on the probability that an element is pruned.
One line of stochastic submodular optimization~\cite{asadpour2008stochastic} assumes items have stochastic states, and would like to maximize a monotone submodular function on the stochastic states, under constraints on the set of chosen items.
In other words, the constraints only depend on the selection of items, but not on the stochastic states of them.
This is a generalization of the stochastic knapsack problem where the size of items are deterministic. Various settings of this problem are investigated by a series of follow-up works~\cite{golovin2011adaptive,chen2013near,fujii2016budgeted,gabillon2013adaptive,gabillon2014large,gotovos2015nonmonotone,yu2016linear}.
Asadpour and Nazerzadeh~\cite{asadpour2016maximizing} considers the maximization of a monotone lattice-submodular function.
In this problem, each selected item has a stochastic state (a non-negative \emph{real} number). The target function accepts a vector of such numbers, and satisfies lattice-submodularity (defined in \Cref{sec:formulation}).
In their problem, only the states are stochastic, while the matroid constraint is on the set of selected items.
Fukunaga et al.~\cite{fukunaga2019stochastic} pushed one step further and allowed the constraints to be dependent on the state of items, but limited the set of states to be non-negative integers.


\section{Preliminary}
\label{sec:formulation}

We start with some notations.
The description of the spot scheduling problem~\cite{yang2021scheduling} is delayed to \Cref{sec:spot_problem}, together with its modeling and reduction to the problem we consider.
In \Cref{sec:spot_reduction}, we explain how we reduce and manage to eliminate one critical assumption in the previous work by Fukunaga et al.~\cite{fukunaga2019stochastic}.

 Given two \(d\) dimensional vectors \(u, v\in [n]^{d}\), we write \(u \leq v\) if the inequality holds coordinate wise, i.e. \(\forall i\in [d], u(i) \leq v(i)\). Similarly, \(u\vee v\) and \(u \wedge v\) are defined coordinate wise: \((u\vee v)(i) = \max\{u(i), v(i)\}\), \((u \wedge v)(i) = \min\{u(i), v(i)\}\).
Consider a base set \([n]\), a matroid is defined to be an independent set \(\cI \subseteq 2^{n}\). This independent set needs to contain \(\emptyset\), and if \(A \in \cI\), so is every \(A' \subseteq A\). Furthermore, if \(A, B \in \cI\) and \(|A| > |B|\), then there exists an element \(x\in A\setminus B\) such that \(B \cup {x}\) is in \(\cI\). Particularly, for a \emph{partition matroid} \(\{\cI_{k}\}_{k\in [K]}\) where \(\cI_{i}\cap \cI_{j} = \emptyset, \forall i\neq j\), its independent set \(\cI\) is \(\{S|\forall k, S\cap \cI_{k} \leq 1\}\).

A function \(f: 2^{d} \to \mathbb{R}\) is \emph{submodular} if for every \(A, B\subseteq [d]\), \(f(A) + f(B) \geq f(A\cup B) + f(A\cap B)\). An equivalent definition is that for every \(A \subseteq B \subseteq [d]\) and \(e \in [d]\), \(f(A \cup \{e\}) - f(A) \geq f(B \cup \{e\}) - f(B)\). This definition is generalized to a domain of \([n]^{d}\), where function \(f: [n]^{d}\rightarrow \R^{ + }\) is called \emph{lattice-submodular} if \(f(u) + f(v) \geq f(u\wedge v) + f(u\vee v)\) holds for all \(u, v\in [n]^{d}\). Note that the \emph{lattice-submodularity} does not imply the property called \emph{DR-submodularity}, which is the diminishing marginal returns along the direction of \(\chi_{i}\) for each \(i \in I\), where \(\chi_{i}\in \{0, 1\}^{n}\), and only the \(i\)-th coordinate is 1. That is, \(f (u + \chi_{i}) - f (u) \geq f (v + \chi_{i}) - f (v)\) does not necessarily hold for all \(u, v \in [n]^{d}\) such that \(u \leq v\) and \(i \in [d]\) even if \(f\) is \emph{lattice-submodular}.
Function \(f: [n]^{d}\rightarrow \R^{ + }\) is called \emph{monotone} if \(f(u) \leq f(v)\) for all \(u \leq v\).

\subsection{Reduction and Eliminating an Assumption}
\label{sec:spot_reduction}
In order to eliminate the second assumption mentioned in \Cref{sec:spot_eliminate}, we introduce the notion of a ``size cap''. For each item \(i\) and a size cap \(b\), we define an item \((i, b)\), where
\(p_{(i, b)}(s) = \pi_{i}(s)\) when \(s < b\); \(p_{(i, b)}(s) = \sum_{s' \geq s}\pi_{i}(s')\) when \(s = b\); and \(0\) otherwise. The new reward function is exactly \(R_{(i, b)}(\cdot)\).

We will be using a time-indexed LP formulation following Ma~\cite{ma2018improvements}. Instead of making a decision at each time step, we do it at each remaining size level. When there is enough room, we take item \(i\) itself into consideration.
If the remaining size \(b\) is small, we are not able to get more reward for an item than when it has a size of \(b\). Therefore, instead of trying to include the original item \(i\), we include item \((i, b)\), which is item \(i\) with size cap \(b\).
Obviously, we can include each item at most once. To achieve this, we impose a partition matroid \(\{\mathcal{I}_{i}\}_{i\in [K]}\) on the items, where \(\mathcal{I}_{i} = \{(i, b)| \forall b\}\). For the remainder of this paper, we view each \((i, b)\) as an  item, and the conflict between them is captured by the partition matroid.


\section{Continuous Optimization Phase}
\label{sec:continuous}
Like most submodular maximization problems, our algorithms consists of two phases, a continuous optimization phase and a rounding phase. In this section, we describe the former.

\subsection{Target Function}
 \label{sec:lp_exp}

 Given a lattice-submodular function \(f: [M]^{n} \rightarrow \mathbb{R}^{ + }\) and a distribution \(q_{S}\) of elements in set \(S \subseteq [n]\), we define a set-submodular function \(\bar{f}: 2^{n} \rightarrow \R_{ + }\), where \(\bar{f}(S) := \E_{r\sim q_{S}} [f(r)]\). This \(\bar{f}\) is guaranteed to be a monotone set-submodular function (See proof in \cite{asadpour2016maximizing}).
 Suppose the final selected (random) set is \(S\), the value we are interested in would be \(\mathbb{E}[\bar{f}(S)]\). Let \(\bar{x}\) be a vector, where \(\bar{x}(i)\) denotes the probability that item \(i\) is in \(S\).
 Using the well-established multi-linear extension, we define \(\bar{F}: 2^{n} \rightarrow \R^{ + }\), where \(\bar{F}(\bar{x}) = \sum_{S \subseteq [n]} \prod_{i\in S}\bar{x}_{i} \prod_{i' \notin S} (1 - \bar{x}_{i'}) \bar{f}(S)\). This is the target function we are maximizing. Evaluating the function \(\bar{F}\) can take exponential time, but it can be approximated within a multiplicative factor of \((1 + \epsilon)\) for any constant \(\epsilon > 0\). For simplicity, we assume \(\bar{F}(\bar{x})\) can be evaluated exactly in this paper, which is standard in the literature (e.g. see~\cite{calinescu2011maximizing}).

\subsection{Stochastic Knapsack Exponential Constraints}
 \label{sec:lp_exp}
 The exponential and polynomial constraints on \(\bar{x}\) are adapted from Ma~\cite{ma2018improvements}.
A group of exponential sized constraints describes the problem exactly. They are then relaxed to have a polynomial size, losing a factor of 2.
For ease of notation, we follow Ma~\cite{ma2018improvements} and view an stochastic item \(i\) as an equivalent Markovian bandit, a special one that can force us to keep pulling it for a certain period of time.
We use state \(u_{i}(k, s)\)  to indicate that arm \(i\) has been pulled \(k\) times, and the corresponding item has size \(s\).
From its initial state \(\rho_{i}\), a single pull would decide the size \(s\) of this job, and move to state \(u_{i}(1, s)\) respectively. We are then forced to keep pulling this arm (we will be using arm and item interchangeably) for the next \(s - 1\) steps, and the last of such pulls moves us to its termination state \(\emptyset_{i}\), and we can pull a new arm.
Denote the probability of moving from state \(u\) to state \(v\) with \(p_{u, v}\).
After the first pull of item \(i\), it moves to state \(u_{i}(1, s)\) (having size \(s\)) with probability \(p_{\rho_{i}, u_{i}(1, s)} = p_{i}(s)\).
 Therefore, if \(k < s\), a pull will transit it to state \(u_{i}(k+1, s)\) with probability \(p_{u_{i}(k, s), u_{i}(k+1, s)} = 1\). Otherwise, when \(k = s\), transit to state \(\emptyset_{i}\) with probability \(p_{u_{i}(k, s), \emptyset_{i}} = 1\), and we are allowed to pull a new arm.

Let \(\pi\) be a vector representing a joint state/node, where \(\pi_{i}\) denotes the state on item \(i\). Let \(S_{i} = \{u_{i}(*, *)\} \cup \{\rho_{i}, \emptyset_{i}\}\) for all \(i\in [n]\), the set of all states for arm \(i\), and \(\tilde{\fS} = S_{1} \times \cdots \times S_{n}\), the set of all possible (maybe invalid) joint states. Let \(\fS' = \{\pi \in \tilde{\fS}| \exists i\neq j, \pi_{i} \notin \{\rho_{i}, \emptyset_{i}\}, \pi_{j}\notin \{\rho_{j}, \emptyset_{j}\}\}\), the set of states where at least two arms are in the middle of processing at the same time, and \(\fS'' = \{\pi \in \tilde{\fS} |  \pi_{i} \neq \rho_{i} \text{ and  } \pi_{j} \neq \rho_{j}, i,j\in \cI_{k} \text{ for some } k\}\), the set of states where some conflicting arms (due to the partition matroid) have been started. Define \(\fS := \tilde{\fS} \setminus (\fS'\cup \fS'')\), which is the set of all valid states. Let \(I(\pi) = \{i | \pi_{i} \neq \emptyset_{i}\}\), the set of arms that could be played from state \(\pi\). Let \(\pi^{u}\) denote the joint node where the component corresponding to \(u\) is replaced by \(u\) (note \(u\) can correspond to only one component). Let \(y_{\pi, t}\) be the probability that we are at state \(\pi\) at time \(t\), and \(z_{\pi, i, t}\) the probability that we pulled arm \(i\) at time \(t\), when the current state was \(\pi\). Recall \(B\) is the total budget, we have the following basic constraints.
 \begin{align}
       \textstyle\sum_{i \in I(\pi)} z_{\pi, i, t} &\leq y_{\pi, t} && \pi\in \fS,  t \in [B]\label{eq:z}\\
        z_{\pi, i, t} &= y_{\pi, t}&& \pi \in \fS, i: \pi_{i} \in S_{i} \setminus\{\rho_{i}, \emptyset_{i}\}, t\in [B]\label{eq:zy}\\
       z_{\pi, i, t} & \geq 0 && \pi \in \fS, i\in [n], t \in [B]\label{eq:zz}
\end{align}

Let \(\fA_{i} = \{\pi \in \fS: \pi_{i} \notin \{\rho_{i}, \emptyset_{i}\}\}\), the joint node when arm \(i\) is in the middle of processing. Note \(\fA_{i}\) and \(\fA_{j}\) are disjoint for \(i\neq j\). We call arm \(i\) the \emph{active} arm. Let \(\fA = \bigcup_{i\in [n]} \fA_{i}\), the set of all states where some arm is \emph{active}. For a state \(\pi\in \fS\), let \(\fP(\pi)\) denote the subset of \(\fS\) that would transit to \(\pi\) with no play, which could happen when some arms turned inactive automatically: if \(\pi \notin \fA\), then \(\fP(\pi) = \{\pi\}\cup (\bigcup_{i\notin I(\pi)} \{\pi^{u} | u\in S_{i}\setminus\{\rho_{i}\}\}\); if \(\pi\in \fA\), then \(\fP(\pi) = \emptyset\). Suppose \(u\) corresponds to coordinate \(i\), define \(\textsf{Par}(u) = \{v \in S_{i}: p_{v, u} > 0\}\), the nodes that have a positive probability of transitioning to \(u\).  Then \(y\)-variables are updated as follows:
{\small
\begin{align}
     &\hspace{-1.7em}y_{(\rho_{1}, \dots, \rho_{n}), 0} = 1\label{eq:y0}\\
     y_{\pi, 0} &= 0, && \pi\in \fS \setminus \{(\rho_{1}, \cdots, \rho_{n})\}\label{eq:y1}\\
  y_{\pi, t} &= \hspace{-1em}\sum_{\pi' \in \fP(\pi)} \bigg(y_{\pi', t-1} - \hspace{-1em}\sum_{i \in I(\pi')}^{n} z_{\pi', i, t-1}\bigg)
  &&
     t > 0, \pi\in \fS \setminus \fA\label{eq:y_idle}\\
  y_{\pi, t} &= \hspace{-1.5em}\sum_{\rho_{i}\in \textsf{Par}(\pi_{i})} \hspace{-0.5em}\bigg(\sum_{\pi' \in \fP(\pi^{\rho_{i}})}^{n} \hspace{-1em}z_{\pi', i, t-1}\bigg)\cdot p_{\rho_{i}, \pi_{i}}, &&
     t > 0, i\in [n], \pi\in \fA_{i}, \pi_{i}\in \{u_{i}(1, * )\}\label{eq:y_d1}\\
  y_{\pi, t} &= \sum_{u\in \textsf{Par}(\pi_{i})}  z_{\pi^{u}, i, t-1}\cdot p_{u, \pi_{i}}, &&
     t > 0, i\in [n], \pi\in \fA_{i}, \pi_{i}\notin \{u_{i}(1, * )\}\label{eq:y_d2}
\end{align}}
\Cref{eq:y_idle} updates \(y_{\pi, t}\) for \(\pi\notin \fA\), i.e. joint nodes with no active arms. Such a joint node \(\pi\) can only come from a no-play from a joint node in \(\fP(\pi)\). \Cref{eq:y_d1,eq:y_d2} update \(y_{\pi, t}\) for \(\pi \in \fA\). To get to the joint node \(\pi\), we must have played arm \(i\) in previous step(s). In \Cref{eq:y_d1}, we consider the case if \(\pi_{i}\) is one of \(u_{i}(1, *)\). We were at \(\rho_{i}\) right before, so it is possible that in the last step, we switched to \(\pi^{\rho_{i}}\) from some joint node in \(\fP(\pi^{\rho_{i}})\) without playing an arm. In \Cref{eq:y_d2}, we consider other cases, in which arm \(i\) was played at time \(t - 1\). These equations guarantee that at each time step, \(y_{ * , t}\) form a distribution, i.e. \(\sum_{\pi \in \fS} y_{\pi, t} = 1\). Combining this with \Cref{eq:z}, we get \(\sum_{\pi\in S} \sum_{i\in I(\pi)} z_{\pi, i, t} \leq 1, \forall t\in [B]\).
\Cref{eq:z,eq:zy,eq:zz,eq:y0,eq:y1,eq:y_idle,eq:y_d1,eq:y_d2} form the exponential constraints. We also need to relate these constraints with \(\bar{x}\) (recall \(\bar{x}(i)\) is the probability that item \(i\) is included): \(\bar{x}(i) = \sum_{t}\sum_{u\in \mathcal{S}_{i}}\sum_{\pi\in \fS:\pi_{i} = u} z_{\pi, i, t}\), which is the last missing piece for our exponential program, denoted as \(\ExpP\).

\subsection{Stochastic Knapsack Polynomial Constraints}
\label{sec:lp_poly}
Obviously, we cannot solve this exponential program directly in polynomial time. In order to solve it, we relax the exponential program by disassemble the joint distribution of items. Let \(s_{u, t}\) be the probability that arm \(i\) is on node \(u\) at the beginning of time \(t\). Let \(x_{u, t}\) be the probability that we pull an arm on node \(u\) at time t. Suppose \(\cS = \bigcup_{i}S_{i}\), we have the following constraints between \(x_{u, t}\) and \(s_{u, t}\).
{\small
\begin{align}
     x_{u, t} &\leq s_{u, t} && u\in \cS, t\in [B]\\
     x_{u, t} &= s_{u, t} && u\in \textstyle\bigcup_{i\in [n]}\cS_{i}\setminus\{\rho_{i}, \emptyset_{i}\}, t\in [B]\\
     x_{u, t} &\geq 0 && u\in \cS, t\in [B]\\
     \textstyle\sum_{u\in \cS} x_{u, t} &\leq 1 && t\in [B]
\end{align}
}
We also need constraints~\labelcref{eq:partition} for the partition matroid of arms (recall \(\cI_{k}\) is a partition), and the state transition constraints~\labelcref{eq:s:u,eq:s:rho,eq:s:all}.
{\small\begin{align}
     \textstyle\sum_{i\in \cI_{k}} s_{\rho_{i}, 0} &\leq 1, \quad\forall \cI_{k} &&
     s_{\rho_{i}, 0} \geq 0,\quad  i\in [n]\label{eq:partition}\\
     s_{u, 0} &= 0 && u\in \cS\setminus \{\rho_{1}, \cdots, \rho_{n}\}\label{eq:s:u}\\
     s_{\rho_{i}, t} &= s_{\rho_{i}, t-1} - x_{\rho_{i}, t - 1} && t > 0, i\in [n]\label{eq:s:rho}\\
     s_{u, t} &= \textstyle\sum_{v\in \textsf{Par}(u)} x_{v, t-1}\cdot p_{v, u}&& t > 0, u\in \cS\setminus \{\rho_{1, \dots, \rho_{n}}\}\label{eq:s:all}
\end{align}
}
Relating these constraints with \(\bar{x}\): \(\bar{x}(i) = \sum_{t}\sum_{u\in \mathcal{S}_{i}} x_{u, t}\), we get the polynomial program \(\PolyP\). For any program \(P\in \{\PolyP, \ExpP\}\), let \(\OPT_{P}\)  denote its optimal value.

\subsection{Relating between the Exponential and the Polynomial Constraints}
\label{sec:relating_exp_poly}

This was given in Ma~\cite{ma2018improvements}, and we re-state for completes without proof. The direction from \(\ExpP\) to \(\PolyP\) is trivial.

\begin{theorem}
[reformation of Lemma 2.3 from Ma~\cite{ma2018improvements} ]
Given a feasible solution \(\{z_{\pi, i, t}\}, \{y_{\pi, t}\}\) to \(\ExpP\), we can construct a solution to \(\PolyP\) with the same objective value by setting \(x_{u, t} = \sum_{\pi\in \fS:\pi_{i} = u} z_{\pi, i, t}\), \(s_{u, t} = \sum_{\pi \in \fS: \pi_{i = u}}y_{\pi, t}\) for all \(i\in [n]\), \(u\in [0, 1]\), \(t\in B\). Thus, the feasible region of \(\PolyP\) is a projection of that of \(\ExpP\) onto a subspace and \(\OPT_{\ExpP} \leq \OPT_{\PolyP}\).
\end{theorem}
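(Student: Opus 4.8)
The plan is to recognize the claimed assignment as a linear \emph{marginalization} of the joint-state variables of $\ExpP$ onto the per-arm variables of $\PolyP$, and to verify two things: that it sends feasible points to feasible points, and that it preserves the vector $\bar x$ and hence the objective $\bar F(\bar x)$. The objective part is immediate: in $\ExpP$ one has $\bar x(i)=\sum_{t}\sum_{u\in\cS_i}\sum_{\pi\in\fS:\pi_i=u} z_{\pi,i,t}$, while in $\PolyP$ one has $\bar x(i)=\sum_{t}\sum_{u\in\cS_i} x_{u,t}$; substituting $x_{u,t}=\sum_{\pi:\pi_i=u} z_{\pi,i,t}$ makes the two expressions literally identical, so $\bar x$ is unchanged, and since $\bar F$ depends on its argument only, the objective value is unchanged. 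Thus the whole content of the statement is feasibility.

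I would check feasibility constraint by constraint, in each case summing the corresponding $\ExpP$ constraint over the fiber $\{\pi\in\fS:\pi_i=u\}$. The sign and comparison constraints are immediate: $x_{u,t}\ge 0$ from \eqref{eq:zz}; $x_{u,t}\le s_{u,t}$ from \eqref{eq:z} (using $z_{\pi,i,t}\le\sum_{j\in I(\pi)}z_{\pi,j,t}\le y_{\pi,t}$ when $i\in I(\pi)$, and $z_{\pi,i,t}=0$ otherwise); and $x_{u,t}=s_{u,t}$ on the mid-processing nodes from \eqref{eq:zy}. The packing constraint $\sum_{u\in\cS} x_{u,t}\le 1$ is exactly the inequality $\sum_{\pi\in\fS}\sum_{i\in I(\pi)} z_{\pi,i,t}\le 1$ already derived in the text, regrouped by the node $u=\pi_i$ that is being played. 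The transition constraints \eqref{eq:s:u}, \eqref{eq:s:rho}, \eqref{eq:s:all} follow by marginalizing the $y$-recurrences \eqref{eq:y0}--\eqref{eq:y_d2}: after fixing $\pi_i=u$ and summing, the joint bookkeeping packed into $\fP(\cdot)$, $\Par(\cdot)$ and the split $\fS=(\fS\setminus\fA)\cup\fA$ collapses onto the single-arm recurrence $s_{u,t}=\sum_{v\in\Par(u)}x_{v,t-1}p_{v,u}$ (and onto the $\rho_i$-update when $u=\rho_i$). This rearrangement is routine precisely because those sets were defined so that every joint transition either plays exactly one arm $i$ (contributing to $x_{\cdot,i,t-1}$) or is a no-play step that does not move the coordinate of any still-active arm.

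The one constraint that needs more than a change of variables is the partition constraint \eqref{eq:partition}, and this is the step I expect to require the most care. Here I would exploit that the valid states $\fS$ were obtained by deleting $\fS''$: along any $\ExpP$ trajectory at most one arm of a given part $\cI_k$ is ever moved out of its state $\rho_i$, and since an arm never re-enters $\rho_i$ once it has left, the events ``arm $i$ is played from $\rho_i$ at some time'' for $i\in\cI_k$ are mutually exclusive, so $\sum_{i\in\cI_k}\sum_t x_{\rho_i,t}\le 1$. Telescoping \eqref{eq:s:rho} gives $\sum_t x_{\rho_i,t}=s_{\rho_i,0}-s_{\rho_i,B}$, so taking the natural normalization $s_{\rho_i,B}=0$ — that is, reading $s_{\rho_i,0}$ as this total start-probability of arm $i$ (which is the only $\rho$-node adjustment to the otherwise verbatim marginalization) — yields \eqref{eq:partition}. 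Everything else is mechanical, which is why the statement can be quoted from Ma~\cite{ma2018improvements} without proof.
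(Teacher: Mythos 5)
The paper provides no proof of this theorem: it is explicitly imported from Ma ``without proof'' and the $\ExpP\to\PolyP$ direction is dismissed as ``trivial.'' Your marginalization strategy is exactly the right way to fill in that blank, and the bulk of the constraint-by-constraint verification is indeed mechanical, as you say. More importantly, you have caught a genuine subtlety that the paper's dismissal glosses over: the verbatim marginalization $s_{u,t}=\sum_{\pi:\pi_i=u}y_{\pi,t}$ forces $s_{\rho_i,0}=y_{(\rho_1,\dots,\rho_n),0}=1$ for \emph{every} arm $i$, so $\sum_{i\in\cI_k}s_{\rho_i,0}=|\cI_k|$, which violates the partition constraint~\eqref{eq:partition} whenever a part has more than one arm. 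This is not a defect in Ma's original Lemma 2.3 --- that LP has no partition constraint --- but the present paper introduces~\eqref{eq:partition} precisely to encode the $(i,b)$ size-cap copies, and the theorem's stated formula for $s_{u,t}$ is not the one that makes the direction go through. Your adjustment (take $s_{\rho_i,0}=\sum_t x_{\rho_i,t}$, or equivalently choose the free initial masses consistent with the telescoped $\rho_i$-recurrence) is the correct repair, and the reason it works is exactly what you identify: the $\fS''$-exclusion makes the start-events of arms within a single part $\cI_k$ disjoint, so the corresponding $x_{\rho_i,t}$ mass sums to at most $1$ across the part.

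Two small points worth tightening if you write this out in full. First, your justification of $\sum_{i\in\cI_k}\sum_t x_{\rho_i,t}\le 1$ appeals to trajectories, i.e.\ to the probabilistic interpretation of the $z,y$ variables; for a self-contained LP argument you would instead track the quantity $P_k(t)=\sum_{\pi:\pi_j=\rho_j\,\forall j\in\cI_k}y_{\pi,t}$ through the $y$-recurrences~\eqref{eq:y_idle}--\eqref{eq:y_d2}, show it starts at $1$, never increases, and drops by exactly $\sum_{i\in\cI_k}x_{\rho_i,t-1}$ at each step (which uses the $\fS''$-exclusion to prevent double counting), and conclude by telescoping. Second, your use of $z_{\pi,i,t}=0$ for $i\notin I(\pi)$ is implicit in the formulation (constraints~\eqref{eq:z}--\eqref{eq:zz} only bound $z$ above for $i\in I(\pi)$), so a careful write-up should state that these variables are fixed to zero by convention; otherwise the packing constraint $\sum_u x_{u,t}\le 1$ does not follow. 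Neither of these affects the soundness of your plan, and on the whole your proposal is not only correct but also more careful than the paper's own treatment of the statement.
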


For the other direction, we construct a solution \(\{z_{\pi, i, t}, y_{\pi, t}\}\) of \(\ExpP\) from a solution \(\{x_{u, t}, s_{u, t}\}\) of \(\PolyP\), which obtains half its objective value. It will satisfy
\[\textstyle\sum_{\pi\in \fS: \pi_{i} = u} z_{\pi, i, t} = \frac{x_{u, t}}{2} \quad i\in [n], u\in \cS_{i}, t\in [B].\]
We define specific \(\{z_{\pi, i, t}, y_{\pi, t}\}\) over \(B\) iterations. On iteration \(t\):
\begin{itemize}
\item Compute \(y_{\pi, t}\) for all \(\pi  \in \fS\).
\item Define \(\tilde{y}_{\pi, t} = y_{\pi, t}\) if \(\pi \notin \fA\), and \(\tilde{y}_{\pi, t} = y_{\pi, t} - \sum_{a\in A} z_{\pi, i, t}\) if \(\pi \in \fA_{i}\) for some \(i\in [n]\) (if \(\pi \in \fA_{i}\), then \(\{z_{\pi, i, t} : a\in A\}\) is already set in a previous iteration).
\item For all \(i\in [n]\), define \(f_{i, t} = \sum_{\pi \in \fS: \pi_{i} = \rho_{i}} \tilde{y}_{\pi, t}\).
\item For all \(i\in [n]\), \(\pi \in\fS\) such that \(\pi_{i} = \rho_{i}\), and \(a\in A\), set \(z^{a}_{\pi, i, t} = \tilde{y}_{\pi, t}\cdot \frac{1}{2} \cdot \frac{x_{\rho_{i, t}}}{f_{i, t}}\).
\item For all \(i\in [n]\), \(\pi\in \fS\) such that \(\pi_{i} = \rho_{i}\) and \(\pi_{j} \in \{\rho_{j}, \phi_{j}\}\) for \(j \neq i\), define \(g_{\pi, i, t} = \sum_{\pi' \in \cP(\pi)} z_{\pi', i, t}\).
\item For all \(i\in [n]\), \(u\in \cS_{i}\setminus \{\rho_{i}\}\), \(\pi\in \fS\) such that \(\pi_{i} = u\), and \(a\in A\), set \(z^{a}_{\pi, i, t+\textsf{depth}(u)} = g_{\pi^{\rho_{i}}, i, t} \cdot (x^{a}_{u, t + \textsf{depth}(u))})/x_{\rho_{i}, t}\).
\end{itemize}

\subsection{Solve the Continuous Optimization Problem}
\label{sec:solve_continuous}
   In order to solve \(\PolyP\), we follow Fukunaga et al.~\cite{fukunaga2019stochastic} and use the Stochastic Continuous Greedy algorithm. This algorithm maximizes the multi-linear extension \(G\) of a monotone set-submodular function \(g\) over a solvable downward-closed polytope. A polytope \(\PP \subseteq [0, 1]^{\NN}\) is considered \emph{solvable} if we can find an algorithm to optimize linear functions over it, and downward-closed if \(x \in \PP\) and \(0 \leq y \leq x\) imply \(y \in \PP\).
In our case, \(\PP\) is solvable due to its linearity, and that solving a linear program falls in polynomial time. Note \(\PP\) is down-monotone. The algorithm involves a controlling parameter called \emph{stopping time}. For a stopping time \(0 < b \leq 1\), the algorithm outputs a solution \(x\) such that \(x/b \in \PP\), while \(G(x) \geq (1 - e^{-b} - O(n^{3}\delta))\max_{y\in Q} G(y)\), where \(n\) is the size of the set over which \(g\) is defined and \(\delta\) is the step size used in the algorithm. Here \(\PP\) is assumed to include the characteristic vector of every singleton set.
\begin{theorem}[reformation of Theorem 3 from Fukunaga et al.~\cite{fukunaga2019stochastic}]
  \label{thm:fukunaga}
If the stochastic continuous greedy algorithm with stopping time \(b = 1/2\in (0, 1]\) and step size \(\delta = o(|I|^{-3})\) is applied to program \(\PolyP\), then the algorithm outputs a solution \(x\in b\PP\) such that \(\bar{F}(\bar{x}) \geq (1 - e^{-b} - o(1)) f_{\textsf{avg}} (\pi^{ * })\simeq 0.3935 f_{\textsf{avg}} (\pi^{ * })\) for any adaptive policy \(\pi^{ * }\).
\end{theorem}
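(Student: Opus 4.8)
The plan is to assemble the bound from three ingredients that are already on the table: the guarantee of the stochastic continuous greedy algorithm quoted just above, the projection inequality $\OPT_{\ExpP}\le\OPT_{\PolyP}$ of Ma~\cite{ma2018improvements}, and the fact that $\ExpP$ is a valid relaxation of the adaptive problem, i.e. $\OPT_{\ExpP}\ge f_{\textsf{avg}}(\pi^{*})$ for every adaptive no-cancellation policy $\pi^{*}$.

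First I would check that the hypotheses needed to run stochastic continuous greedy on $\PolyP$ hold: $\barf$ is monotone and set-submodular~\cite{asadpour2016maximizing}, its multilinear extension is by construction $\barF$ (the objective of $\PolyP$), and the feasible region $\PP$ of $\PolyP$ is a solvable down-monotone polytope (\Cref{sec:solve_continuous}) that contains the characteristic vector of every singleton, since a policy may legally play a single size-capped item on its own. Feeding this to the algorithm with stopping time $b=1/2\in(0,1]$ and step size $\delta=o(|I|^{-3})$ yields an $x\in b\PP$ with
\[
\barF(\bar x)\;\ge\;\bigl(1-e^{-b}-O(|I|^{3}\delta)\bigr)\max_{\bar y\in\PP}\barF(\bar y)\;=\;\bigl(1-e^{-1/2}-o(1)\bigr)\,\OPT_{\PolyP},
\]
using $O(|I|^{3}\delta)=o(1)$ and $\OPT_{\PolyP}=\max_{\bar y\in\PP}\barF(\bar y)$. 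Combining this with $\OPT_{\PolyP}\ge\OPT_{\ExpP}\ge f_{\textsf{avg}}(\pi^{*})$ and the identity $1-e^{-1/2}=1-1/\sqrt{e}\simeq0.3935$ then gives the claim.

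The only non-mechanical step is the last inequality $\OPT_{\ExpP}\ge f_{\textsf{avg}}(\pi^{*})$, and this is where I expect the real difficulty. The natural move is to transcribe the run of $\pi^{*}$ into $\ExpP$ variables, $y_{\pi,t}:=\Pr_{\pi^{*}}[\text{joint state }\pi\text{ at time }t]$ and $z_{\pi,i,t}:=\Pr_{\pi^{*}}[\text{arm }i\text{ pulled at time }t\text{ from }\pi]$; these satisfy all constraints of $\ExpP$ because those constraints merely describe how the joint-state distribution evolves under pulls together with the budget cutoff, so the point is feasible. The subtlety is that the induced $\bar x$ is only the vector of selection probabilities of $\pi^{*}$, and for an adaptive policy the selected set $S$ may be negatively correlated across items, so $\barF(\bar x)$ can be \emph{strictly smaller} than $f_{\textsf{avg}}(\pi^{*})=\E_{S}[\barf(S)]$; one therefore cannot conclude by monotonicity of $\barF$ alone. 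I would handle this as in Fukunaga et al.~\cite{fukunaga2019stochastic}, exploiting monotone submodularity of $\barf$ together with the size-cap / time-indexed structure of $\ExpP$ to exhibit a feasible solution of $\ExpP$ on which $\barF$ is at least $f_{\textsf{avg}}(\pi^{*})$ --- possibly a feasible point other than the literal transcription of $\pi^{*}$. The $b=1/2$, $\delta=o(|I|^{-3})$ bookkeeping that turns $O(|I|^{3}\delta)$ into the stated $o(1)$ is then routine.
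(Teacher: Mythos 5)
Your three-step decomposition --- the stochastic continuous greedy guarantee applied to $\PolyP$, Ma's projection inequality $\OPT_{\ExpP}\le\OPT_{\PolyP}$, and the relaxation bound $\OPT_{\ExpP}\ge f_{\textsf{avg}}(\pi^{*})$ --- is exactly the structure on which this theorem rests, and you correctly identify the third inequality as the only non-mechanical step. The paper itself supplies no proof: it simply labels the statement a ``reformation'' of Fukunaga et al.'s Theorem~3, so you are filling in an argument the paper takes for granted. One caveat is worth stressing about the step you defer: Fukunaga et al.\ establish their relaxation bound for \emph{their} LP (the Gupta-et-al.-style one), not for Ma's time-indexed $\ExpP$ used here, and --- as you observe --- because an adaptive $\pi^{*}$ correlates the realized rewards of played items with which other items enter $S$, the literal transcription of $\pi^{*}$ into $(y,z)$-variables produces a feasible point of $\ExpP$ whose objective $\barF(\bar x_{\pi^{*}})$ need not already dominate $f_{\textsf{avg}}(\pi^{*})$. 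Consequently the needed inequality does not follow from merely citing Fukunaga et al.; one must exhibit a possibly different feasible point of $\ExpP$ (using the time-indexed/size-cap structure to rearrange play times so as to raise the appropriate marginals) on which $\barF$ is at least $f_{\textsf{avg}}(\pi^{*})$. This is the one genuine piece of work hidden behind the word ``reformation,'' and your proposal rightly flags it but, like the paper, does not actually carry it out.
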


\section{Rounding Phase}
\label{sec:rounding_phase}
Now that we have a fractional solution \(x\), we proceed to round it to an integral policy (notice the fractional solution has already been scaled by a factor of 2). We need a variant of the contention resolution scheme that was introduced as a general framework for designing rounding algorithms that maximizes expected submodular functions (\cite{chekuri2014submodular,feldman2013maximization,feldman2011unified}). The variant is an extension from a set submodular function to a lattice-submodular function, first introduced in Fukunaga et al.~\cite{fukunaga2019stochastic}. We include its definition here for self-containment.

\subsection{Contention Resolution Scheme}
\label{sec:CRS}
A contention resolution scheme (CRS) accepts a \emph{pairwise independently} rounded solution which may violate some constraints, and fixes it without losing too much on expectation. Let \(f: [B]^{n}\rightarrow \R_{+}\) be a monotone lattice-submodular function and the probability distribution \(q_{i}: [B] \rightarrow [0, 1]\) on \([B]\) be given for each \(i\in \{1, \dots, n\}\}\). We write \(v\sim q\) if \(v\in [B]^{n}\) is a random vector such that, for each \(i\in \{1, \dots, n\}\), the corresponding component \(v(i)\) is determined independently as \(j\in [B]\) with probability \(q_{i}(j)\). This is the independently rounded solution we feed into a CRS. Let \(\cF \subseteq [B]^{n}\) be a downward-closed subset of \([B]^{n}\), and let \(\alpha \in [0, 1]\). We have the following definition for a \(\alpha\)-CRS, its monotonicity, and one key property.
\begin{definition}
[\(\alpha\)-Contention Resolution Scheme (\(\alpha\)-CRS)]
A mapping \(\psi: [B]^{n} \rightarrow \cF\) is an \(\alpha\)-CRS with respect to \(q\) if it satisfies:
\begin{enumerate}
\item \(\psi(v)(i)\in \{v(i), 0\}\) for each \(i\in [n]\);
\item if \(v\sim q\), then \(\Pr[\psi(v)(i) = j|v(i) = j] \geq \alpha\) holds for all \(i\in I\) and \(j\in B\). The probability is based on randomness both in \(v\) and in \(\psi\) when \(\psi\) is randomized.
\end{enumerate}
\end{definition}

\begin{definition}
[monotone \(\alpha\)-CRS]
An \(\alpha\)-CRS \(\psi\) is considered \emph{monotone}, if, for each \(u, v\in [B]^{n}\) such that \(u(i) = v(i)\) and \(u \leq v\), \(\Pr[\psi(u)(i) = u(i)] \geq \Pr[\psi(v)(i) = v(i)]\) holds. The probability is based only on the randomness of \(\psi\).
\end{definition}

\begin{lemma}
[Theorem 4 from Fukunaga et al.~\cite{fukunaga2019stochastic}]
If \(\psi\) is a monotone \(\alpha\)-CRS with respect to \(q\), then \(\E_{v\sim q}[f(\psi(v))] \geq \alpha \E_{v\in q}[f(v)]\).
\label{lemma:CRS}
\end{lemma}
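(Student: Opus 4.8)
The plan is the telescoping argument underlying monotone CRS analyses (Chekuri--Vondr\'ak--Zenklusen; Feldman et al.), adapted to the lattice setting; the only subtlety is to make sure the marginals that actually appear are controlled by plain lattice-submodularity rather than DR-submodularity, which we cannot assume. Fix the coordinate order $1,\dots,n$. For $w\in[B]^n$ let $w_{\le i}$ agree with $w$ on coordinates $\le i$ and be $0$ on coordinates $>i$ (so $w_{\le 0}=\mathbf 0$ and $w_{\le n}=w$), and set $\Delta_i(w):=f(w_{\le i})-f(w_{\le i-1})$, so that $f(w)=f(\mathbf 0)+\sum_{i=1}^n\Delta_i(w)$. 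Note that $\Delta_i(w)$ depends on $w$ only through $w(1),\dots,w(i)$, vanishes when $w(i)=0$, and is nonnegative by monotonicity of $f$. The key structural fact I would first establish is antitonicity of these marginals: for fixed $i$ and $j\neq 0$, and $a\le a'$ both vanishing on coordinates $\ge i$, lattice-submodularity applied to $x=a\vee j\chi_i$ and $y=a'$ --- where $x\wedge y=a$ and $x\vee y=a'\vee j\chi_i$ (on coordinate $i$ the values $j$ and $0$ meet and join to $0$ and $j$; on coordinates $<i$ one gets $a$ and $a'$; coordinates $>i$ are all $0$) --- gives $f(a\vee j\chi_i)-f(a)\ge f(a'\vee j\chi_i)-f(a')$. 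Equivalently, $\Delta_i(w)$ with $w(i)$ held fixed is coordinatewise non-increasing in $w(1),\dots,w(i-1)$. This is precisely where the lack of DR-submodularity does no harm, since the two marginals being compared differ from their bases only in coordinate $i$.

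Next, expand $\E[f(\psi(v))]$. Since $\psi(v)\le v$ and $\psi(v)(i)\in\{v(i),0\}$, every coordinate with $\psi(v)(i)\neq 0$ has $\psi(v)(i)=v(i)$; only such coordinates contribute to the telescoping of $f(\psi(v))$, and for them $\psi(v)_{\le i-1}\le v_{\le i-1}$ together with the antitonicity above yields $\Delta_i(\psi(v))\ge\Delta_i(v)$. Hence $f(\psi(v))\ge f(\mathbf 0)+\sum_{i:\,\psi(v)(i)=v(i)}\Delta_i(v)$. Taking expectations over $v\sim q$ and the internal randomness of $\psi$, and writing $p_i(v):=\Pr_\psi[\psi(v)(i)=v(i)]$ (all of $\psi$'s randomness that matters, as $\Delta_i(v)$ does not involve $\psi$), the $i$-th summand becomes $\E_v[\Delta_i(v)\,p_i(v)]$. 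It therefore suffices to prove, for each $i$, the per-coordinate bound $\E_v[\Delta_i(v)\,p_i(v)]\ge\alpha\,\E_v[\Delta_i(v)]$; summing over $i$ and using $f(\mathbf 0)\ge 0$ and $\alpha\le 1$ then gives $\E[f(\psi(v))]\ge f(\mathbf 0)+\alpha\bigl(\E[f(v)]-f(\mathbf 0)\bigr)\ge\alpha\,\E[f(v)]$.

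For the per-coordinate bound, condition on $v(i)=j$ (the $j=0$ term is vacuous since $\Delta_i\equiv 0$ there). Because $\Delta_i$ ignores coordinates $>i$, averaging $p_i$ over them produces a function $\bar p_i$ of $v(1),\dots,v(i-1)$ alone, with $\E_v[\Delta_i p_i\mid v(i)=j]=\E_{v(1),\dots,v(i-1)}[\Delta_i\,\bar p_i]$. Both $\Delta_i$ and $\bar p_i$ are coordinatewise non-increasing in $v(1),\dots,v(i-1)$: $\Delta_i$ by the lattice-submodularity step, and $\bar p_i$ because the monotonicity hypothesis on $\psi$ makes $p_i$ non-increasing in every coordinate other than $i$, which survives averaging over coordinates $>i$. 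Since $q$ restricted to coordinates $1,\dots,i-1$ is a product measure, the FKG (Harris) inequality --- equivalently Chebyshev's sum inequality applied one coordinate at a time --- gives $\E[\Delta_i\bar p_i]\ge\E[\Delta_i]\,\E[\bar p_i]$, where $\E[\bar p_i]=\Pr[\psi(v)(i)=v(i)\mid v(i)=j]\ge\alpha$ by the $\alpha$-CRS property and $\E[\Delta_i]\ge 0$ by monotonicity of $f$; multiplying and averaging over $j$ with weights $q_i(j)$ completes the argument.

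The main obstacle is the correlation between the event ``$\psi$ keeps coordinate $i$'' and the gain $\Delta_i(v)$ of that coordinate: because $\psi$ may inspect the whole realization $v$, these are not independent, so one cannot simply factor out $\alpha$. The resolution is two-sided monotonicity --- lattice-submodularity forces $\Delta_i$ to be antitone in the earlier coordinates, monotonicity of the CRS forces the keep-probability to be antitone in all other coordinates, and FKG converts ``antitone in the same direction under a product measure'' into the required positive correlation. A secondary, purely bookkeeping point is to verify carefully that only kept coordinates contribute to the telescoping of $f(\psi(v))$ and that $\psi(v)_{\le i-1}\le v_{\le i-1}$, so that the single-coordinate-difference structure needed for the lattice-submodularity step genuinely holds.
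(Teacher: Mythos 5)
The paper itself gives no proof of this lemma --- it is imported verbatim as Theorem~4 of Fukunaga et al.\ --- so there is no in-paper argument to compare against; I can only assess your proof on its own merits, and it is correct. The telescoping decomposition $f(w)=f(\mathbf 0)+\sum_i\Delta_i(w)$ with $\Delta_i$ depending only on coordinates $\le i$ and vanishing when $w(i)=0$; the lattice-submodularity computation with $x=a\vee j\chi_i$, $y=a'$ giving $x\wedge y=a$, $x\vee y=a'\vee j\chi_i$ and hence antitonicity of $\Delta_i$ in the earlier coordinates (this is precisely what lets you avoid DR-submodularity, and the meet/join identities do check out); the pointwise bound $\Delta_i(\psi(v))\ge\Delta_i(v)$ on kept coordinates via $\psi(v)_{\le i-1}\le v_{\le i-1}$; the reduction to the per-coordinate inequality $\E[\Delta_i p_i]\ge\alpha\E[\Delta_i]$ by conditioning on $v(i)=j$ and integrating out coordinates $>i$; the coordinatewise antitonicity of the averaged keep-probability $\bar p_i$ inherited from the monotone-CRS hypothesis; the FKG/Harris step for the product measure; the use of $\E[\bar p_i]\ge\alpha$ from the $\alpha$-CRS marginal guarantee; and the final bookkeeping with $f(\mathbf 0)\ge 0$ and $\alpha\le 1$ are all sound. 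This is the standard Chekuri--Vondr\'ak--Zenklusen/Feldman-style CRS argument, carefully and correctly transported to the lattice setting, which is presumably what Fukunaga et al.\ do in the cited proof as well.
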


\subsection{Rounding Algorithm}
\label{sec:rounding_alg}

To fit in the contention resolution scheme, we need to first round everything independently. This means for each pair \((i, t)\), item \(i\) is scheduled at time \(t\) with probability \(x_{\rho_{i}, t}\).
Now we have a set \(R' = \{(i, t)\}\) of proposed item time pairs. We sort the set according to \(t\), and include the items one by one. Intuitively, for a pair \((i, t)\), we will only include item \(i\) if time \(t\) is available and does not invalidate the solution, i.e. each item is scheduled at most once, and at most one item from each partition. After including it in our solution, we get its realized size, and mark the corresponding time slots unavailable.

The main problem of this naive approach is that it does not exhibit monotonicity, which is a subtle but critical requirement for a CRS. To fix it, we schedule \emph{phantom item} \(i\) even when we cannot fit it. We \emph{simulate} its inclusion, and sample its size \(\size_{i}\) should it be included. We also mark those time slots corresponding to this \emph{phantom item} unavailable, even when they are actually unoccupied. This seemingly wasteful step ensures that the rounding scheme is monotone. The final rounding algorithm is described in \Cref{alg:rounding}.

\begin{algorithm}[htbp]
  \ForEach{pair \((i, t)\)}{
    Sample \((i, t)\) with probability \(x_{\rho_{i}, t}\), and gets \(\emptyset\) otherwise\;
    \lIf{not get \(\emptyset\)}{
      \(I \gets I \cup \{(i, t)\}\)
    }
  }
  Sort \(I\) according to a non-decreasing ordering of \(t\), break ties uniformly at random\;
  \(C = 0, S = \emptyset\), mark all times slots available\;
  \For{\((i, t) \in I\)}{
    \eIf{time slot \(t\) is available \textbf{and} item \(i\) does not violate constraints} {
      Include item \(i\) and observe \(s_{i}\)\;
    }{
      Simulate including item \(i\), and observe \(s_{i}\)\;
    }
    Mark time slots from \(t\) to \(t + s_{i}\) unavailable\;
  }
     \caption{Rounding Algorithm}
     \label{alg:rounding}
\end{algorithm}

The remaining of this section is devoted to proving the following theorem, which combined with \Cref{thm:fukunaga} leads to our main result.
\begin{theorem}
Let \(\pi\) denote \Cref{alg:rounding}, and \(x\) denote the solution we get from \(\PolyP\). Then \(f_{\text{avg}}(\pi)\geq \bar{F}(\bar{x})/2\).
\label{theorem:half}
\end{theorem}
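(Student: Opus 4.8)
The plan is to show that Algorithm~\ref{alg:rounding}, viewed as a map from the independently rounded solution to a feasible (random) set of items, is a monotone $\tfrac12$-CRS with respect to the distribution $q$ defined by $q_i(t) = x_{\rho_i,t}$; the theorem then follows by combining \Cref{lemma:CRS} with the fact that $\bar F(\bar x) = \E_{v\sim q}[\bar f(v)]$ (since $\bar x(i) = \sum_t x_{\rho_i,t}$ is exactly the marginal that item $i$ appears in $v$, and the multilinear extension of $\bar f$ evaluated at the product distribution is its expectation under independent rounding). So the real content is to verify the two CRS axioms with $\alpha = \tfrac12$ and the monotonicity property. Axiom~(1) is immediate: the algorithm only ever keeps item $i$ at the time $t$ it was proposed, or drops it. For monotonicity, I would argue that because we schedule \emph{phantom} items whenever a real item cannot be fit, the set of ``blocked'' time slots at any point in the execution depends only on which pairs $(i,t)$ were proposed with $t$ at most the current time and on the realized sizes — and crucially not on whether earlier items were really included or only simulated. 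Hence adding more proposed pairs (going from $u$ to $v\ge u$, same $i$-coordinate) can only block more slots and only make it less likely that item $i$'s slot is free when its turn comes; this is where the phantom-item device does its work, and I expect the careful bookkeeping here to be the main obstacle, since one must track the coupling between the two executions on $u$ and on $v$ and show the blocked-slot set is monotone along the common randomness of $\psi$.

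For the quantitative bound (axiom~(2)), fix a pair $(i,t)$ and condition on $v(i) = t$, i.e.\ on $(i,t)$ being proposed. Item $i$ is actually included iff, when its turn comes in the sorted order, time slot $t$ is still available \emph{and} no other item from $i$'s partition has already been included. I would bound the failure probability by a union bound over two ``bad'' events: (a) slot $t$ was already marked unavailable by some earlier-processed pair $(i',t')$ with $t' \le t$ whose realized (possibly phantom) size reaches $t$; and (b) some pair $(i',t')$ with $i'$ in the same partition as $i$ and $t'\le t$ (or $t'=t$ but ordered earlier by the random tie-break) was proposed. The LP constraints are what make each of these at most $\tfrac14$: constraint~\labelcref{eq:s:all} together with $x_{u,t}\le s_{u,t}$ controls, for each time slot $t$, the total probability mass $\sum_{u} x_{u,t}$ of an arm being pulled there to be at most $1$, and after the scaling by $2$ already applied to $x$ this mass is at most $\tfrac12$; a standard calculation (as in Ma~\cite{ma2018improvements}) shows the expected number of distinct slots blocked ``across'' time $t$ is at most $\tfrac12\cdot\tfrac12$, and similarly the partition constraint~\labelcref{eq:partition}, scaled, gives that the expected number of same-partition competitors proposed by time $t$ is at most $\tfrac12$, hence at most $\tfrac14$ after accounting for the tie-break. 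Summing, $\Pr[\text{fail}\mid v(i)=t] \le \tfrac14 + \tfrac14 = \tfrac12$, so $\Pr[\psi(v)(i)=t\mid v(i)=t]\ge \tfrac12$, giving $\alpha = \tfrac12$.

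The order of steps I would carry out: first state the reduction to ``Algorithm~\ref{alg:rounding} is a monotone $\tfrac12$-CRS'' and cite \Cref{lemma:CRS}; second, set up the coupled description of the algorithm's execution as a deterministic function of the proposed set, the realized sizes, and the tie-break permutation, and isolate the ``blocked set'' $B_{\le t}$ of unavailable slots just before $(i,t)$ is processed; third, prove monotonicity by showing $B_{\le t}$ is monotone nondecreasing as the proposed set grows, which is exactly where the phantom items are needed (without them, including fewer items could free up a slot and \emph{increase} the chance a later item fits, breaking the inequality); fourth, do the two union-bound calculations using the scaled LP constraints. I expect the monotonicity argument in step three to be the delicate part — it requires that the phantom-inclusion branch and the real-inclusion branch mark \emph{exactly the same} slots, so that the execution's blocking behavior is insensitive to past include/simulate decisions, and one has to be careful that the partition-feasibility check (which does differ between the two branches, since a simulated item does not occupy its partition) does not reintroduce non-monotonicity; I would handle this by noting that the partition check only ever causes \emph{more} dropping, never more inclusion, so it is monotone in the right direction as well.
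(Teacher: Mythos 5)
Your high-level reduction is exactly the paper's: view Algorithm~\ref{alg:rounding} as $\omega\circ\sigma$ where $\sigma$ is the independent rounding and $\omega$ is a candidate CRS, show $\omega$ is a monotone $\tfrac12$-CRS, invoke \Cref{lemma:CRS}, and use \Cref{lemma:sigma} to pass from $\E[f(\sigma(x))]$ to $\bar F(\bar x)$. Axiom~(1) and the role of the phantom items in the monotonicity argument are also described correctly in spirit, though you should note the paper makes the monotonicity fully explicit by factoring $\Pr[\omega(u)(i)=j]$ into a product $a_{i,t}\,b_{i,t,u}\,c_{i,t}$ over independent blocking events (which is precisely what the phantom items buy you) and comparing termwise; the coupling picture you sketch is on the right track but is not yet a proof.

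The genuine gap is in the quantitative $\alpha=\tfrac12$ bound. Your split into two bad events, each claimed to be $\le\tfrac14$, is not correct. The same-partition event alone is not bounded by $\tfrac14$: after the stopping-time scaling, the expected number of same-partition competitors proposed strictly before time $t$ can be as large as $\tfrac12\sum_{j\in\cI_k\setminus\{i\}}(s_{\rho_j,0}-s_{\rho_j,t})$, and the tie-break discount of $\tfrac12$ applies only to competitors proposed at exactly time $t$, not to those proposed earlier; so there is no $\tfrac12\cdot\tfrac12=\tfrac14$ bound in general. Likewise the blocking event is controlled by $\tfrac12\sum_j\sum_{u\ne\rho_j}x_{u,t}$, which can approach $\tfrac12$, not $\tfrac14$. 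What actually closes the argument is a balance, not a pair of quarters: the paper proves the per-competitor estimate
\[\textsf{Drop}_{i,t}(j)\ \le\ \tfrac12\sum_{u\in\{\emptyset_j\}\cup\{u_j(\ast,\ast)\}}x_{u,t}\ +\ \tfrac12\,x_{\rho_j,t}\]
(valid in both the same-partition and different-partition cases, the first term via a telescoping identity that converts ``started at $t'<t$ and still running'' into mass on transient/terminal states at time $t$), then sums over all $j$ and uses $\sum_{u\in\cS}x_{u,t}\le 1$ to obtain $\textsf{Drop}_{i,t}\le\tfrac12(1-\sum_j x_{\rho_j,t})+\tfrac12\sum_j x_{\rho_j,t}=\tfrac12$. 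The two pieces trade off against each other through the single LP constraint at time $t$; neither is individually small. (A minor point: the constraint you want for this last step is $\sum_{u\in\cS}x_{u,t}\le 1$, not~\labelcref{eq:s:all}.) So you should replace the $\tfrac14+\tfrac14$ accounting with this balanced summation over $j$.
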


To prove \Cref{theorem:half}, we define two mappings \(\sigma(\cdot)\) and \(\omega(\cdot)\), where the first corresponds (roughly) to the step that maps \(x\) to \(I\) in \Cref{alg:rounding}, and \(\omega(\cdot)\) corresponds to the mapping (CRS) from set \(I\) to the final output. The mapping \(\sigma(\bar{x})\) receives a real vector \(\bar{x}\in [0, 1]^{n}\) and returns a random vector \(v\in [B]^{n}\). From each partition \(\cI_{k}\), we pick at most one \(i\), each \(i\in \cI_{k}\) is picked with probability \(\bar{x}(i)\). If it is picked, the \(i\)-th component \(v(i)\) independently takes value \(j\) with probability \(p_{i}(j)\), and 0 otherwise, which happens with probability \(1 - \sum_{j}p_{i}(j)\). This captures the construction of set \(I\) (only the item part, note \(\Pr[\sigma(x)(i) > 0] = \Pr[\exists t, \textrm{s.t.} (i, t)\in I]\)), together with the random outcome of the item. The mapping \(\omega(\cdot)\) maps \(v\in [B]^{n}\) to \(w\in [B]^{n}\). To mimic \Cref{alg:rounding}, we first assign time value \(t(i)\) to each component \(v(i)\), according to \(x_{\rho_{i}, t}\). Based on \(t(i)\), we form a precedence ordering \(\prec\) between \(i\) after random tie breaking (a random tie breaking is crucial). Then, we set \(\omega(v)(i) =0\) if there exists a component \(j \prec i\) such that \(t(j) \leq t(i) < t(j) + v(i)\), and \(w(v)(i) = v(i)\) otherwise. We can observe that given input \(x\), \Cref{alg:rounding} outputs exactly \(\omega(\sigma(x))\) if the random realized sizes of items are the same.
In order to prove \Cref{theorem:half}, we need the following two lemmas. The first, whose proof in Fukunaga~\cite{fukunaga2019stochastic}, corresponding to the independent rounding step, and the second corresponding to the CRS step.

\begin{restatable}{lemma}{lemmaSigma}
\(\E[f(\sigma(x))] \geq \bar{F}(\bar{x})\) holds for any \(x\in P\).
\label{lemma:sigma}
\end{restatable}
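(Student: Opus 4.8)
The plan is to unwind the definition of $\sigma$ and recognize that it performs exactly an independent rounding of the fractional vector $\bar{x}$, followed by independent sampling of each selected item's reward-bearing state, so that the claimed inequality is the standard fact that the multilinear extension is a lower bound for the expected value of independent rounding. Concretely, I would first observe that for a fixed subset $S \subseteq [n]$ of items, the event ``$\sigma(\bar x)(i) > 0$ exactly for $i \in S$'' occurs with probability $\prod_{i \in S} \bar x(i) \prod_{i' \notin S}(1 - \bar x(i'))$ — this uses the fact that the partition-matroid bound $\sum_{i \in \cI_k}\bar x(i) \le 1$ (constraint~\eqref{eq:partition}, inherited through $\bar x(i) = \sum_t \sum_{u \in \cS_i} x_{u,t}$ and the packing constraint $\sum_u x_{u,t}\le 1$) guarantees the ``at most one per partition'' selection rule is consistent with picking each $i$ independently with marginal $\bar x(i)$; there is no correlation penalty because picking two items from the same $\cI_k$ never happens under the measure we define.

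Next, conditioned on the selected set being exactly $S$, the components $v(i)$ for $i \in S$ are drawn independently with $\Pr[v(i) = j] = p_i(j)$ and $\Pr[v(i) = 0] = 1 - \sum_j p_i(j)$, while $v(i) = 0$ for $i \notin S$; comparing with the definition of $q_S$ in the problem statement (Footnote on the objective: for $i \in S$, $\Pr[r_i = R_i(s)] = p_i(s)$, and $r_i = 0$ for $i \notin S$), this conditional distribution of $v$ is precisely $q_S$ — here I would need the minor bookkeeping remark that $f$ being evaluated on the $[B]^n$ ``size'' coordinates versus the reward coordinates matches up because $\bar f(S) = \E_{r \sim q_S}[f(r)]$ is defined through the reward functions $R_i$, and we may absorb the map $s \mapsto R_i(s)$ into $f$ (this is the convention already in force in \Cref{sec:CRS}, where $f: [B]^n \to \R_+$). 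Hence $\E[f(\sigma(\bar x)) \mid \text{selected set} = S] = \E_{r \sim q_S}[f(r)] = \bar f(S)$, and averaging over $S$ gives
\[
\E[f(\sigma(\bar x))] = \sum_{S \subseteq [n]} \Big(\prod_{i \in S}\bar x(i)\prod_{i' \notin S}(1-\bar x(i'))\Big)\bar f(S) = \bar F(\bar x),
\]
by the definition of the multilinear extension $\bar F$ in \Cref{sec:lp_exp}. So in fact equality holds, which is at least as strong as the stated inequality.

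I expect the only real subtlety — the ``main obstacle'' — to be the consistency check on the partition-matroid step: one must verify that the sampling rule ``from each $\cI_k$ pick at most one $i$, where $i$ is picked with probability $\bar x(i)$'' can genuinely be realized as a product measure with the prescribed marginals, i.e. that it is legitimate to treat the $i$'s as independent Bernoulli$(\bar x(i))$ variables. This is where $\sum_{i \in \cI_k} \bar x(i) \le 1$ is used, and one should state explicitly how $\sigma$ resolves a (measure-zero under independence, but definitionally possible) collision within a partition; since the collision event has probability zero once we declare the variables independent with those marginals, or alternatively the rule can be implemented by a single categorical draw per partition with outcome probabilities $\bar x(i)$ and a residual ``nothing'' outcome, the product-form identity for $\Pr[\text{selected set} = S]$ goes through. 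Everything else is routine substitution of definitions. (Since the paper says this lemma's proof is in Fukunaga et al.~\cite{fukunaga2019stochastic}, I would at most sketch the above and cite it.)
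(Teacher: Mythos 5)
Your argument contains a genuine gap. The mapping $\sigma$ as defined in the paper picks \emph{at most one} item from each partition $\cI_k$, and this is not a product measure: under independent Bernoulli$(\bar{x}(i))$ sampling, two items $i_1, i_2$ in the same partition would both be picked with probability $\bar{x}(i_1)\,\bar{x}(i_2) > 0$, not zero, so your assertion that ``the collision event has probability zero once we declare the variables independent with those marginals'' is false, and the categorical-draw implementation you offer as an alternative produces the partition rule, not the product form. Consequently the factorization $\Pr[\text{selected set} = S] = \prod_{i\in S}\bar{x}(i)\prod_{i'\notin S}(1-\bar{x}(i'))$ does not hold for $\sigma$, and your conclusion of \emph{equality} $\E[f(\sigma(\bar{x}))] = \bar{F}(\bar{x})$ is incorrect as stated.

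What saves the lemma is submodularity of $\bar{f}$, which is the actual content here and cannot be treated as ``routine substitution of definitions.'' For a single partition $\{a,b\}$, any $T$ disjoint from it, and $p_a = \bar{x}(a)$, $p_b = \bar{x}(b)$, the partition-rule expectation exceeds the independent-rounding expectation by exactly
\[
p_a p_b\Big(\bar{f}(T\cup\{a\}) + \bar{f}(T\cup\{b\}) - \bar{f}(T) - \bar{f}(T\cup\{a,b\})\Big) \;\geq\; 0,
\]
which is nonnegative by submodularity; iterating over pairs within each partition and over partitions yields $\E[f(\sigma(\bar{x}))] \geq \bar{F}(\bar{x})$. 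Without submodularity the sign can go either way, so the inequality is not purely definitional. (If instead you adopt the reading---supported by \Cref{alg:rounding} and the authors' acknowledgement, though not by the literal definition of $\sigma$---that $\sigma$ does pure independent rounding across all items, then your equality conclusion would be right and the lemma would indeed be definitional; but in that case the entire partition-matroid discussion in your proof should be dropped rather than justified by the false ``measure-zero collision'' claim.)
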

\begin{restatable}{lemma}{lemmaCRS}
\(\omega\) is a \(1/2\)-CRS with respect to \(\bar{x}\).
\label{lemma:CRS-omega}
\end{restatable}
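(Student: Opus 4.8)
The plan is to prove Lemma~\ref{lemma:CRS-omega}, namely that $\omega$ is a $1/2$-CRS with respect to $\bar{x}$; once we have this, combining it with Lemma~\ref{lemma:sigma} and the monotone-CRS guarantee (Lemma~\ref{lemma:CRS}) immediately yields
\[
  f_{\textrm{avg}}(\pi) = \E[f(\omega(\sigma(x)))] \geq \tfrac12\, \E[f(\sigma(x))] \geq \tfrac12\, \bar{F}(\bar{x}),
\]
which is Theorem~\ref{theorem:half}. So I would first dispatch that two-line deduction, and then spend the bulk of the argument on the CRS property, checking both parts of the $\alpha$-CRS definition \emph{and} monotonicity (monotonicity is needed because Lemma~\ref{lemma:CRS} only applies to monotone CRSs).

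For the CRS definition, property~(1) ($\omega(v)(i)\in\{v(i),0\}$) is immediate from the construction of $\omega$. The heart is property~(2): for $v\sim\bar{x}$ (i.e.\ $v=\sigma(x)$ in distribution), conditioned on $v(i)=j>0$, we must show $\Pr[\omega(v)(i)=j]\geq 1/2$. Fix $i$ and condition on item $i$ being selected with time label $t(i)=t$ and size $v(i)=j$. Item $i$ is kept unless some earlier (in $\prec$) item $k$ occupies time $t$, i.e.\ $t(k)\le t< t(k)+v(k)$. By a union bound, the probability of being blocked is at most $\sum_{k\ne i}\Pr[k\prec i \text{ and } k \text{ blocks time } t]$; using the independence built into $\sigma$ (the item choices across partitions are independent) and the random tie-breaking (which makes $\Pr[k\prec i]\le 1/2$ or handles ties symmetrically), I would bound this by $\sum_{\tau\le t}\sum_{k} x_{\rho_k,\tau}\Pr[v(k)\ge t-\tau+1]\cdot(\text{tie factor})$. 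The key point is that the feasibility constraints of $\PolyP$ — in particular the state-transition equations \labelcref{eq:s:rho,eq:s:all} together with $\sum_u x_{u,\tau}\le 1$ — imply that the total expected "occupation mass" at any fixed time slot $t$ coming from all other items is at most the $s$-mass at that slot, which the LP keeps $\le 1$; this is exactly the role the phantom items play, since they guarantee the occupation-by-time-$t$ probability matches what the LP prescribes regardless of whether earlier items were actually placed. Carefully, I expect the blocking probability to come out to $\le 1/2$, giving $\Pr[\omega(v)(i)=j\mid v(i)=j]\ge 1/2$.

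For monotonicity: given $u\le v$ with $u(i)=v(i)$, I need $\Pr[\omega(u)(i)=u(i)]\ge\Pr[\omega(v)(i)=v(i)]$ where the probability is over the internal randomness of $\omega$ only (time-label assignment and tie-breaking). Since $\omega$ assigns each item its own time label independently and then blocks $i$ iff some $\prec$-earlier item covers $t(i)$, and since a smaller coordinate vector $u$ means every blocking item has a size no larger than under $v$ (so the event "$k$ blocks $i$" under $u$ is a subset of the same event under $v$, coordinatewise, when the internal randomness is coupled), the blocking event only shrinks. The cleanest route is to couple the randomness of $\omega(u)$ and $\omega(v)$ (same time labels, same tie-breaking) and observe pointwise that if $i$ is kept under $v$ it is kept under $u$. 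This gives monotonicity directly.

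The main obstacle I anticipate is the occupation-mass bound in property~(2): one has to argue that summing, over all other arms $k$ and all start-times $\tau\le t$, the product of "$k$ starts at $\tau$" and "$k$'s realized size reaches past $t$", telescopes against the LP's $s_{u,t}$ variables in a way that stays below $1$ (or below $1/2$ after accounting for the $\prec$-ordering and the factor-$2$ scaling already applied to $x$). This requires unpacking exactly how the Markovian-bandit encoding tracks "an arm is mid-processing covering time $t$" via the states $u_i(k,s)$, and using \labelcref{eq:s:all} to rewrite the covering probability as $s$-mass; the phantom-item device is what makes this bookkeeping exact rather than an inequality in the wrong direction. Handling the tie-breaking rigorously (so that $\Pr[k\prec i]$ is genuinely $\le 1/2$, or a symmetric argument splits the tie mass) is a secondary but necessary care point.
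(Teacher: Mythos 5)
Your overall architecture matches the paper's — property~(1) by construction, property~(2) via a union bound over potential blockers, rewriting the ``item $k$ starts at $\tau$ and is still mid-processing at $t$'' mass through the Markovian state transitions \labelcref{eq:s:all} into $\sum_{u}x_{u,t}$ mass, and closing with $\sum_{u\in\cS}x_{u,t}\le 1$. That skeleton is exactly what the paper's proof of \Cref{lemma:CRS-omega} (via \Cref{lemma:case}) does for the different-partition contribution.

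However, there is a genuine gap: you only account for blocking due to time-slot overlap, ``$t(k)\le t< t(k)+v(k)$.'' In the algorithm an item is also dropped when it \emph{violates the partition-matroid constraint}: a same-partition item $j\in\cI_k$ (including another copy $(i,t')$ of $i$ itself at an earlier time, which independent rounding can produce) blocks $(i,t)$ as soon as it has been \emph{included} at \emph{any} earlier time $t'<t$, irrespective of whether $j$'s realized span reaches slot $t$. Your sum $\sum_{\tau\le t}\sum_k x_{\rho_k,\tau}\Pr[v(k)\ge t-\tau+1]$ captures only the overlap event and so undercounts the drop probability. The paper handles this by splitting \Cref{lemma:case} into two cases. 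In Case~1 ($j$ in the same partition as $i$) the bound is $\tfrac12\bb{s_{\rho_j,0}-s_{\rho_j,t}}+\tfrac12 x_{\rho_j,t}$, i.e.\ $j$ was ever scheduled before $t$, no size-reach condition; in Case~2 ($j$ in a different partition) the bound is $\tfrac12\sum_{t'<t}x_{\rho_j,t'}\Pr[\size_j\ge t-t']+\tfrac12 x_{\rho_j,t}$, which is your overlap sum. The crucial algebraic observation is then that \emph{both} cases admit the common bound $\tfrac12\sum_{u\in\{\emptyset_j\}\cup\{u_j(*,*)\}}x_{u,t}+\tfrac12 x_{\rho_j,t}$, so that summing over all $j$ telescopes against $\sum_{u\in\cS}x_{u,t}\le 1$ to give $1/2$. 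Without the same-partition case your final bound does not cover the actual drop event, and the calculation would not close.

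A secondary point: you ascribe the factor of $1/2$ loosely to ``tie factor / scaling already applied,'' but the paper is explicit — the first term in each case carries $1/2$ (traceable to the scaled solution / union-bound fix discussed in the acknowledgement) while the second term carries $\Pr[j\prec i]\le \tfrac12$ from the symmetric tie-break at time $t$; being precise about which factor comes from where is necessary to see the two terms combine into exactly $1/2$ rather than something weaker. Finally, monotonicity is a separate lemma (\Cref{lemma:mono-CRS-omega}), not part of the statement you were asked to prove; your coupling sketch for it is along the right lines, but note the paper proves it by an explicit product-form computation of $\Pr[\omega(u)(i)=j]$ and a coordinatewise ratio comparison, which only works \emph{because} phantom items guarantee independence across $i'$ in the product.
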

\begin{restatable}{lemma}{lemmaMonoCRS}
The \(1/2\)-CRS \(\omega\) is monotone.
\label{lemma:mono-CRS-omega}
\end{restatable}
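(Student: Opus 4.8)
The plan is to unpack the definition of monotone $\alpha$-CRS for $\omega$ directly and reduce it to a coupling/domination statement about the time-slot-blocking process. Fix $i \in [n]$ and two vectors $u, v \in [B]^n$ with $u(i) = v(i)$ and $u \le v$ coordinatewise. We must show $\Pr[\omega(u)(i) = u(i)] \ge \Pr[\omega(v)(i) = v(i)]$, where the probability is over the internal randomness of $\omega$ only — namely the choice of time values $t(j)$ (each $j$ with $t(j)$ drawn according to the conditional distribution induced by $x_{\rho_j, t}$ given the realized size $v(j)$) and the uniform random tie-breaking that produces the order $\prec$. By the definition of $\omega$, the event $\{\omega(v)(i) = v(i)\}$ is exactly the event that no component $j$ with $j \prec i$ has $t(j) \le t(i) < t(j) + v(j)$, i.e. that the half-open "blocking interval" $[t(j), t(j)+v(j))$ of no earlier-in-$\prec$ item $j$ covers the point $t(i)$. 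So I must argue this non-covering event is at least as likely under $u$ as under $v$.

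The key observation is a pointwise domination: for every item $j \ne i$, shrinking its size from $v(j)$ to $u(j) \le v(j)$ can only shrink its blocking interval $[t(j), t(j)+v(j)) \supseteq [t(j), t(j)+u(j))$ — note that the distribution of $t(j)$ itself does not depend on the size in the relevant way, or more precisely we couple the two processes so that $t(j)$, and the tie-breaking permutation, and hence $\prec$, are identical in both instances (the $t$-values and tie-breaks are drawn from the same source of randomness; where the conditional law of $t(j)$ does depend on $v(j)$ one has to be slightly more careful — see the obstacle below). Under such a coupling, $i$ is blocked under $u$ only if it is blocked under $v$: every $j$ that blocks $i$ in the $u$-instance has $t(j) \le t(i) < t(j) + u(j) \le t(j) + v(j)$ and the same $\prec$, hence blocks $i$ in the $v$-instance too. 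Taking probabilities, $\Pr[i \text{ not blocked under } u] \ge \Pr[i \text{ not blocked under } v]$, which is exactly the claimed inequality. I would also need to note that $\omega(v)(i)$ only takes the values $v(i)$ or $0$ (property 1 of a CRS, already established in Lemma~\ref{lemma:CRS-omega}), so "$\omega(v)(i) = v(i)$" is genuinely the complement of "blocked", with the degenerate case $v(i) = u(i) = 0$ being trivial since then the probability is $1$ on both sides.

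The main obstacle is handling the dependence of the time-value distribution $t(j) \sim (x_{\rho_j, \cdot}$ restricted/normalized appropriately$)$ on the realized size, and — more importantly — the role of the \emph{phantom items}. Changing $u(j)$ to $v(j)$ changes how many slots item $j$ occupies, which in the real algorithm feeds back into availability and hence into whether \emph{later} items get truly included; this is precisely why a naive version of the claim fails and why phantoms were introduced. In the cleaned-up mapping $\omega$, however, blocking is defined purely in terms of the intervals $[t(j), t(j)+v(j))$ and the order $\prec$, with \emph{no} feedback — an item $j$ blocks $i$ based on its own sampled size regardless of whether $j$ was "really" included or only simulated. This decoupling is exactly the point of the phantom-item construction, and it is what makes the monotone coupling above go through. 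So the bulk of the write-up will be (i) carefully specifying the joint coupling of $\{t(j)\}_j$ and the tie-breaking across the $u$- and $v$-instances so that $\prec$ agrees, (ii) verifying that under this coupling the blocking event for $i$ is monotone in the sizes, and (iii) confirming the definition of $\omega$ indeed uses only $v(j)$ and $\prec$ (not realized availability) so no feedback term appears. I expect step (i) to require the mild observation that $t(j)$'s law depends on $v(j)$ only through which conditional branch of $x_{\rho_j,\cdot}$ is selected, and that one can still couple so that the support point $t(j)$ is common; if a fully common coupling is not available, one falls back to coupling $t(i)$ and the permutation and arguing the inequality conditionally on those, which suffices.
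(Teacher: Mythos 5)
Your proof is correct but takes a genuinely different route from the paper's. The paper proceeds computationally: it writes out $\Pr[\omega(u)(i) = j]$ as an explicit sum over the time slot $t$ assigned to $i$, factors each summand into three products $a_{i,t}\,b_{i,t,u}\,c_{i,t}$ (probability $t$ is the first scheduled time for $i$; probability no different-partition item blocks slot $t$; probability no same-partition item is scheduled before $t$), observes that only the $b_{i,t,\cdot}$ factor depends on the sizes, and shows $b_{i,t,u}/b_{i,t,v} \ge 1$ by exhibiting the ratio as a product of terms of the form $(1 - \Pr[i' \prec i\,|\,t(i)=t]\, x_{\rho_{i'},t'})^{-1} \ge 1$ over the shrinking index range $t' \in [t-v(i'),\, t-u(i')]$. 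You instead give a coupling/pathwise-domination argument: couple the internal randomness $(\{t(j)\}_j, \text{tie-breaks})$ identically across the $u$- and $v$-instances (valid because the law of $t(j)$ and the tie-break is independent of the realized sizes), and observe that with $u(j) \le v(j)$ the blocking interval $[t(j), t(j)+u(j))$ is contained in $[t(j), t(j)+v(j))$, so every sample path that blocks $i$ under $u$ also blocks $i$ under $v$. This is more conceptual and transparent than the paper's product-of-ratios calculation, and it makes the role of the phantom items crisp — as you correctly emphasize, they are exactly what ensures $\omega$'s blocking depends only on sampled sizes and $\prec$, with no availability feedback, which is what lets the coupling go through. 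The concern you flag about the law of $t(j)$ depending on $v(j)$ does not actually arise in this construction ($t(j)$ is drawn from $x_{\rho_j,\cdot}$ independently of $v(j)$), so your "fallback" is unnecessary; one minor point worth making explicit in a full write-up is that the same-partition blocking condition (an item $j$ in $i$'s partition blocking $i$ simply by being scheduled earlier) is size-independent and therefore identical under the coupling, so only the different-partition interval-blocking is sensitive to the shrinkage $u \le v$.
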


\Cref{lemma:sigma} is trivially true by the definition of \(\bar{F}(\bar{x})\), which is the common starting point of contention resolution scheme. We first prove \(\omega\) is a \(1/2\)-CRS.

\begin{proof}[Proof of \Cref{lemma:CRS-omega}]
  Recall there are two properties needed for an \(\alpha\)-CRS. The first property is obviously correct due to the definition of \(\omega(\cdot)\). The second property needs to prove \(\Pr[\omega(v)(i) = j|v(i) = j] \geq 1/2\). In the language of the rounding algorithm, let \(\textsf{Drop}_{i, t}\) denotes the event (respect to the randomness in \(\omega\) and \(v\)) that we drop the pair \((i, t)\). It is the same as proving
\[\Pr[\textsf{Drop}_{i, t}|\text{item }i\text{ is selected at time  }t] \leq \frac{1}{2}.\]

Due to the way we round the solution, item \(i\) may be included more than once (at different times), and more than one item from the same partition may be included. Consider an item \(j\) at time \(t'\) (maybe the same as \(i\)) that could affect the pruning of item \(i\) at time \(t\).
Define \((j, t') \prec (i, t)\) if \(t' < t\), or \(t' = t\) and \(j \prec i\).
It is clear that \((j, t')\) will affect \((i, t)\) if and only if \((j, t')\prec (i, t)\)
We slightly abuse notation, and let \(\textsf{Drop}_{i, t}(j)\) denote the probability that the item \(j\) \emph{can} causes the drop out of item \(i\) if a copy of it is scheduled at time \(t\). Note this does not depend on whether item \(i\) is scheduled on \(t\) or not. We have:

\begin{lemma}
\[\textsf{Drop}_{i, t}(j) \leq \frac{1}{2}\sum_{u\in \{\emptyset_{j}\} \cup\{u_{j}(* , * )\}} x_{u, t} + \frac{1}{2} x_{\rho_{j}, t}.\]
\label{lemma:case}
\end{lemma}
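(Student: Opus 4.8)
The plan is to bound the probability that some copy of item $j$ scheduled at time $t'$ can force the removal of a copy of item $i$ scheduled at time $t$, summing over all relevant $(j,t')$ with $(j,t')\prec(i,t)$. Fix item $i$ and time $t$. For a given $j$, a copy at time $t'$ can cause the drop of $(i,t)$ only if $t'\le t<t'+\size_j$ where $\size_j$ is the (simulated or real) size of that copy. In the rounding algorithm every scheduled copy of $j$ — whether genuinely included or merely simulated as a phantom — blocks the slots $t'$ through $t'+\size_j$, so the event ``a copy of $j$ at time $t'$ blocks slot $t$'' occurs with probability exactly $x_{\rho_j,t'}\cdot\Pr[\size_j > t-t']$, independently across $t'$ (the independent rounding step makes the copies independent). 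The first thing I would do is write $\Drop_{i,t}(j)$ as the probability of the union, over $t'\le t$ (with the tie-break at $t'=t$ giving $j\prec i$ only half the time), of these blocking events, and upper bound the union probability by the sum of the individual probabilities.

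Next I would rewrite $\Pr[\size_j > t-t']$ in terms of the state variables. Recall $s_{u,t}$ is the probability arm $j$ sits on node $u$ at the start of time $t$ and $x_{u,t}$ the probability it is pulled there; the Markov-chain structure (a pull at $\rho_j$ at time $t'$ lands on $u_j(1,s)$ with probability $p_j(s)$, then forces $s-1$ further pulls) means that conditioned on $j$ being started at time $t'$, its size exceeds $t-t'$ exactly when the resulting node at time $t$ is one of the ``in-progress'' nodes $u_j(*,*)$, or the item has just terminated at $\emptyset_j$ after occupying slot $t$. Using constraints~\labelcref{eq:s:rho,eq:s:all} to propagate $x_{\rho_j,t'}$ forward, the total contribution $\sum_{t'}x_{\rho_j,t'}\Pr[\size_j>t-t']$ collapses to $\sum_{u\in\{\emptyset_j\}\cup\{u_j(*,*)\}} s_{u,t}$, since $s_{u,t}$ aggregates precisely the probability mass of arm $j$ being on such a node at time $t$, coming from all possible start times $t'<t$. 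By the polynomial-program equalities $x_{u,t}=s_{u,t}$ for $u\in\cS_j\setminus\{\rho_j,\emptyset_j\}$, the in-progress part equals $\sum_{u\in\{u_j(*,*)\}}x_{u,t}$; and $s_{\emptyset_j,t}$ differs from $x_{\emptyset_j,t}$ only by the mass that terminated exactly entering time $t$, which is also captured. The $\tfrac12$ in front of the first sum comes from the fact that, for copies of $j$ with $t'=t$, the random tie-break puts $j\prec i$ with probability $\tfrac12$, and for $t'<t$ the blocking copy landing on such a node at time $t$ already carries the right factor through the scaled solution (recall $x$ was scaled by $2$, but the $s,x$ appearing here are the \emph{scaled} ones, so I must be careful to state the bound in terms of exactly the variables in the lemma).

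The separate $\tfrac12 x_{\rho_j,t}$ term accounts for a copy of $j$ scheduled at time $t$ itself: such a copy occupies slot $t$ with certainty (a pull at $\rho_j$ at time $t$ still ``uses'' time $t$), and it precedes $i$ in the random ordering with probability $\tfrac12$. This is the one summand whose blocking probability does not go through the $\size_j>t-t'$ analysis, hence it is singled out. So I would handle it separately: its contribution to $\Drop_{i,t}(j)$ is at most $\tfrac12\Pr[\text{$j$ scheduled at time }t]=\tfrac12 x_{\rho_j,t}$.

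The main obstacle I anticipate is the bookkeeping at the boundary $t'=t$ and the exact accounting of which mass $s_{\emptyset_j,t}$ versus $x_{\emptyset_j,t}$ should appear: a copy of $j$ that finishes \emph{exactly} at the step entering time $t$ still blocked slot $t-1$ but not $t$, while one still in progress at time $t$ does block $t$, and I need the telescoping via~\labelcref{eq:s:rho,eq:s:all} to land cleanly on $\sum_{u\in\{\emptyset_j\}\cup\{u_j(*,*)\}}x_{u,t}$ without an off-by-one. I would carry this out by summing constraint~\labelcref{eq:s:all} over the relevant nodes $u$ and telescoping the recursion for $s_{\rho_j,\cdot}$ from time $0$ to $t$, which expresses $\sum_{t'<t}x_{\rho_j,t'}p_j(s)\cdot[\text{still running or just done at }t]$ as exactly the claimed state-probability sum. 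The union bound and the two tie-break factors are then routine; the content is entirely in this telescoping identity and in correctly isolating the $t'=t$ term.
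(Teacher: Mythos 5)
Your premise that a copy of $j$ at time $t'$ can drop $(i,t)$ only if $t'\le t< t'+\size_j$ is wrong in one of the two cases the paper distinguishes, and that case is essential. When $j$ lies in the \emph{same partition} of the matroid as $i$ --- for instance another budget-capped copy $(i,b')$ of the same underlying item --- a copy of $j$ that is actually included at any $t'<t$ causes $(i,t)$ to be dropped by the constraint check in \Cref{alg:rounding} even if $\size_j$ is small and slot $t$ has long since become free again. Your union is over slot-occupancy events only, so the quantity you bound is strictly smaller than $\Drop_{i,t}(j)$ for same-partition $j$: the relevant mass there is $\sum_{t'<t}x_{\rho_j,t'}$, which dominates $\sum_{t'<t}x_{\rho_j,t'}\Pr[\size_j\ge t-t']$ and can be strictly larger, so your telescoping does not recover it. Proving an upper bound on the smaller quantity does not yield the lemma.

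The paper's proof therefore splits into two cases. For same-partition $j$ it union-bounds by the probability that any copy of $j$ was sampled before $t$, namely $\tfrac12\sum_{t'<t}x_{\rho_j,t'}=\tfrac12(s_{\rho_j,0}-s_{\rho_j,t})$ by telescoping \Cref{eq:s:rho}, and then identifies this with $\tfrac12\sum_{u\in\{\emptyset_j\}\cup\{u_j(*,*)\}}x_{u,t}$ because having left $\rho_j$ before $t$ means sitting, at time $t$, on a non-start node of arm $j$. The different-partition case is then essentially the telescoping you outline, rewriting $\sum_{t'<t}x_{\rho_j,t'}\Pr[\size_j\ge t-t']$ via the state-transition constraints as a sum over $x_{u_j(k,\tau),t}$ and bounding by the same right-hand side; your argument reproduces this Case 2 and needs Case 1 added. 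The $t'=t$ tie-break and the $\pm1$ boundary you worry about are real but minor by comparison: the paper simply replaces $\Pr[j\prec i]$ by $1$ in the last term and is loose on the $\emptyset_j$ contribution, which costs slack rather than correctness.
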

\begin{proof}[Proof of \Cref{lemma:case}]
There are two cases and we bound the probability of dropping in each case.
\begin{enumerate}[{Case} 1.]
  \item \(j\) belongs to the same partition as \(i\),
  \item \(j\) belongs to a different partition.
\end{enumerate}

For the first case, the probability that it makes \((i, t)\) invalid is
\begin{align*}
  \textsf{Drop}_{i, t}(j)
  \leq&\frac{1}{2}(s_{\rho_{j}, 0} - s_{\rho_{j}, t}) + \Pr[\text{item } j \text{ is considered before }i]\cdot \frac{1}{2} x_{\rho_{j}, t}\\
       \leq&  \frac{1}{2}\sum_{u\in \{\emptyset_{j}\} \cup\{u_{j}(* , * )\}} x_{u, t} + \frac{1}{2} x_{\rho_{j}, t}.
\end{align*}
The first term is the probability that at least one item \(j\) is scheduled before time \(t\). Note this is actually an union bound due to our independent rounding. The second term is the probability that it is scheduled at time \(t\), but will invalidate \(i\) since \(j \prec i\). The second equality comes from the fact that if item \(j\) is scheduled some time before \(t\), then it must be at some state at time \(t\) that is not the starting state \(\rho_{j}\). In other words, either the end state \(\emptyset_{j}\) or some transient state \(u_{j}(*, *)\).

For the second case,
fix \(j\), it can only drop \(i\) if it marked time slot \(t\) unavailable. The probability is
     \begin{align*}
       \textsf{Drop}_{i, t}(j)
       \leq&\frac{1}{2}\sum_{t' = 1}^{t- 1}  x_{\rho_{j}, t'}\cdot \Pr[\size_{j} \geq t - t']
       + \Pr[\text{item } j \text{ is considered before }i]\cdot \frac{1}{2} x_{\rho_{j}, t}\\
\leq & \frac{1}{2}\sum_{t'=1}^{t - 1} x_{\rho_{j}, t'}\cdot \Pr[\size_{j} \geq t - t'] + \frac{1}{2}x_{\rho_{j}, t}.
     \end{align*}
     The first term is a summation of all the possible starting point of job \(j\), times the probability that it will mark time slot \(t\) unavailable. Note this is also a union bound since there can be more than one copy of item \(j\) due to independent rounding. The second term is the probability that item \(j\) is also scheduled at time \(t\), but is considered before \(i\), i.e. \(j \prec i\), which marks time slot \(t\) unavailable for \(i\). We focus on the first term,
\begin{align*}
   &\sum_{t' = 1}^{t - 1} x_{\rho_{j}, t'} \cdot \Pr[\size_{j} \geq t - t']
   =\sum_{t' = 1}^{t - 1} \sum_{\tau = t - t'}^{B - t} x_{\rho_{j}, t'}\Pr[\size_{j} = \tau]
   =\sum_{t' = 1}^{t - 1} \sum_{\tau = t - t'}^{B - t} x_{u_{j}(1, \tau), t' + 1}\\
   =&\sum_{t' = 1}^{t - 1} \sum_{\tau = t - t'}^{B - t} x_{u_{j}(t - t', \tau), t}
   \leq\sum_{u\in \{\emptyset_{j}\} \cup\{u_{j}(* , * )\}} x_{u, t}.
\end{align*}
The last inequality holds because the index set of the summation on the left is a subset of that on the right.
\end{proof}

Therefore, the total probability that item \(i\) is blocked by any item is upper bounded by the union bound:
     \begin{align*}
       &\textsf{Drop}_{i, t}
       = \sum_{j)} \textsf{Drop}_{i, t}(j)
\leq \frac{1}{2}\sum_{j}\sum_{u\in \{\emptyset_{j}\}\cup \{u_{i}( * , * ) \}}x_{u, t} + \frac{1}{2}\sum_{j\in [n]} x_{\rho_{j}, t}\\
\leq & \frac{1}{2}\sum_{j\in [n]}\sum_{u\in \{\emptyset_{j}\}\cup \{u_{i}( * , * ) \}}x_{u, t} + \frac{1}{2}\sum_{j\in [n]} x_{\rho_{j}, t}
\leq  \frac{1}{2}(1 - \sum_{j\in n} x_{\rho_{j}, t}) + \frac{1}{2}\sum_{j\in [n]} x_{\rho_{j}, t}
=  \frac{1}{2}.
     \end{align*}%
   \end{proof}
Lastly, we show \(\omega\) is monotone in \Cref{sec:lemma-proof:mono-CRS-omega}.
With everything ready, we can now prove \Cref{theorem:half}, which combined with \Cref{thm:fukunaga} leads to the main claim.

\begin{proof}[Proof of \Cref{theorem:half}]
The output \(r\) of \Cref{alg:rounding} satisfies \(\E[f(r)] = \E[f(\omega(\sigma(x)))]\), and its feasibility is guaranteed by the algorithm. By \Cref{lemma:CRS-omega} and \Cref{lemma:mono-CRS-omega}, \(\omega\) is a monotone \(1/2\)-CRS with respect to \(q\), where \(q\) is the probability defined in \Cref{lemma:CRS-omega}. Moreover, \(\sigma(x) \sim q\) holds. By \Cref{lemma:CRS}, \(\E[f(\omega(\sigma(x)))] \geq \E[f(\sigma(x))]/2\). Using \Cref{lemma:sigma}, we get \(f_{\text{avg}}(\pi) = \E[f(r)]= \E[f(\omega(\sigma(x)))] \geq \E[f(\sigma(x))]/2\geq \bar{F}(\bar(x))/2\).
\end{proof}


\section{Conclusion}
\label{sec:dissertation-conclusion}
We consider the well studied correlated stochastic knapsack problem, generalizing its target function with submodularity to capture diminishing returns. An extra partition matroid constraint is added to generalize it and resolve an open question raised in a previous work to eliminate an assumption. We also make improvement on the approximation ratio. There is still a gap of \(2\) comparing to the variant with linear target function and we leave it as an open problem to close the gap.


\section*{Acknowledgement}
We would like to thank the reviewers for comments on a previous version of this draft, who found a flaw in the algorithm and its analysis. The old algorithm imposes the partition matroid \emph{while} doing the first rounding step, leading to dependency between items. This seemingly convenient step actually breaks the correctness of contention resolution scheme,  which is built on FKG inequality and intrinsically needs an independent rounding step. We fixed the issue by replacing it with a true independent rounding step, and fix the solution to fit the partition matroid later on. While this breaks the symmetry between items, the gap of 2 turns out to be large enough to fix everything. Check the use of union bound in Case 1 for the proof of \Cref{lemma:case} for details.

\bibliography{bibfile}
\appendix
\section{Missing Proofs}
\label{sec:missing-proofs}



\subsection{Proof of \Cref{lemma:mono-CRS-omega}}
\label{sec:lemma-proof:mono-CRS-omega}

\lemmaMonoCRS*

   \begin{proof}[Proof of \Cref{lemma:mono-CRS-omega}]
     Suppose vectors \(u, v\in [B]^{n}\) satisfies \(u \leq v\), and \(u(i) = v(i) = j > 0\). We only need to show \(\Pr[\omega(u)(i) = j ]\geq \Pr[\omega(v)(i) = j]\), where randomness is with respect to the choice of time and ordering. Let \(I\) denote the partition that includes item \(i\). In this case,
\[\Pr[\omega(u)(i) = j] = \sum_{t = 1}^{B} x_{\rho_{i}, t}\prod_{\tau = 1}^{t-1}(1 - x_{\rho_{i}, \tau}) \left(\prod_{i'\notin I} \prod_{t' = t - u(i')}^{t}(1 - \Pr[i' \prec i | t(i) = t] x_{\rho_{i'}, t'})\right) \cdot \left(\prod_{i'\in I}\prod_{t'=0}^{t-1}(1 - x_{\rho_{i'},t'})\right). \]
This is a summation over all possible time slot that the first copy of item \(i\) is scheduled.
For simplicity, we define the following:
\begin{align*}
  a_{i, t} & = \sum_{t = 1}^{B} x_{\rho_{i}, t}\prod_{\tau = 1}^{t-1}(1 - x_{\rho_{i}, \tau}) \\
  b_{i, t, u} & = \prod_{i'\notin I} \prod_{t' = t - u(i')}^{t}(1 - \Pr[i' \prec i | t(i) = t] x_{\rho_{i'}, t'})\\
  c_{i, t} & = \prod_{i'\in I}\prod_{t'=0}^{t-1}(1 - x_{\rho_{i'},t'})
\end{align*}
which re-writes \(\Pr[\omega(u)(i) = j]\) as \(\sum_{t=1}^{B}a_{i, t}b_{i, t, u}c_{i, t}\).

The part in the first large bracket (\(b_{i, t, u}\)) is the probability that non of the items in a different partition prunes item \(i\) at time \(t\). The part in the second large bracket (\(c_{i, t}\)) is that for items in the same partition. Such multiplication of probability is only possible due to the phantom items and the independence they brought.

We wish to prove that \(\Pr[\omega(u)(i) = j] \geq  \Pr[\omega(v)(i)=j]\), and we instead prove that coordinately \(a_{i, t}b_{i, t, u}c_{i, t} \geq a_{i, t}b_{i, t, v}c_{t}\geq 0\) always holds. In fact
\begin{align*}
  &\frac{a_{i, t}b_{i, t, u}c_{i, t}}{a_{i, t}b_{i, t, v}c_{i, t}}\\
  =& \frac{b_{i, t, u}}{b_{i, t, v}}\\
  =& \frac{\prod_{i'\notin I} \prod_{t' = t - u(i')}^{t}(1 - \Pr[i' \prec i | t(i) = t] x_{\rho_{i'}, t'})}{\prod_{i'\notin I} \prod_{t' = t - v(i')}^{t}(1 - \Pr[i' \prec i | t(i) = t] x_{\rho_{i'}, t'})}\\
  =& \prod_{i'\notin I}\frac{ \prod_{t' = t - u(i')}^{t}(1 - \Pr[i' \prec i | t(i) = t] x_{\rho_{i'}, t'})}{\prod_{t' = t - v(i')}^{t}(1 - \Pr[i' \prec i | t(i) = t] x_{\rho_{i'}, t'})}\\
  =& \prod_{i'\notin I}\left( \prod_{t' = t - v(i')}^{t - u(i')}\frac{1}{1 - \Pr[i' \prec i | t(i) = t] x_{\rho_{i'}, t'}}\right)\\
  \geq & 1
\end{align*}

So \(a_{i, t}b_{i, t, u}c_{i, t} \geq a_{i, t}b_{i, t, v}c_{t}\). We sum both sides over \(t\), which leads to \(\Pr[\omega(u)(i) = j ]\geq \Pr[\omega(v)(i) = j]\).
\end{proof}


\section{Scheduling ML Jobs on Cloud Spot Instances}
\label{sec:spot_problem}
\textbf{Cloud Computing Instance Characteristics:}
Demands for cloud resources display large fluctuations across time and availability zones~\cite{calzarossa2016workloads,saripalli2011load}.
During times of low actual demand, cloud vendors make unused resources available to user as cheaper entities, a.k.a. spot instances\footnote{Known as \emph{spot instances} by Amazon Web Services (AWS), \emph{low-priority VMs} by Microsoft Azure, \emph{preemptible instances} by Google Cloud and \emph{transient} virtual machines/servers in some literature. We refer to all such revocable computing instances as \textit{spot instances}.}, that may be interrupted, so they can take them back when demands surge.
In practice, spot instances are often available at up to 70\%-90\% discounts compared to their on-demand equivalences~\cite{harlap2017proteus,amazon-spot-price}, making them an economical option if interruptions can be handled.

\textbf{Machine Learning Characteristics:}
An ML training algorithm is usually an iterative algorithm (each iteration is also known as an epoch) and it produces a better estimate of the model parameters with each iteration, usually with diminishing marginal returns.
If a long-running ML training job is interrupted prematurely, model parameter estimates from the latest successfully completed iteration are still a valid model instance, hence interruptions can be tolerated with adequate planning. We try to model and answer the following question from a theoretical perspective:
\begin{quote}
How can ML training jobs be scheduled and executed on interruptible but relatively inexpensive spot instances to increase their cost efficiency?
\end{quote}

Now we give a rigid definition of the spot scheduling problem. For justification of the modeling, please refer to Yang et al.~\cite{yang2021scheduling}. $\mathcal{N}$ jobs need to be scheduled on $\mathcal{M}$ instances, where the instances may have different CPU/RAM configurations, i.e.\ have different speeds for various jobs, or come from different available zones, i.e.\ have different interruption patterns. Each instance has a finite supply, and without loss of generality, we assume different copies to be separate instances.
For a spot instance \(i\), a job can run on it for a period time before interruption, which follows a given distribution independent of each other.
Let $\pi_{i, s}$ be the probability that instance \(i\) costs exactly $s$ dollars before it gets interrupted.
When we schedule a job \(j\) on an instance \(i\), we can also specify a budget cap.
Let $R_{(j, i)}(s)$ denotes the progress of job \(j\) achieves when \(s\) dollars have been spent, before the last check point, e.g.\ the number of trained epochs.
Notice the function \(R_{(j, i)}(\cdot)\) is monotone, i.e.\ $R_{(j, i)}(s) \leq R_{(j, i)}(s')$ if \(s \leq s'\).
In practice, when we schedule job \(j\) onto instance \(i\), it cannot start training immediately. Some processing time is wasted on environment setup and checkpoint restoration, which does not count towards progress.
This is captured by setting \(R_{(j, i)}(s) = 0\) if \(s\) dollars is not enough to finish the first epoch.
When a job gets interrupted, we can reschedule it on a different instance, starting from the latest checkpoint.
The total utility model this as a submodular function.
With a given budget $B$, we would like to maximize the total expected utility of all jobs.

\subsection{Reduction}
\label{sec:scheduling_reduction}

The reduction from the scheduling problem to the final knapsack problem is as follows. For each job \(j\), instance \(i\) and budget cap \(b\), we define an item \((j, i, b)\), where
\(p_{(j, i, b)}(s) = \pi_{i, s}\) when \(s < b\); \(p_{(j, i, b)}(s) = \sum_{s' \geq s}\pi_{i, s'}\) when \(s = b\); and \(0\) otherwise. The new reward function is exactly \(R_{(j, i, b)}(\cdot)\). Notice a job may be scheduled on multiple instances sequentially due to interruptions, but for each instance, only a single job can be scheduled on it, and a specific budget cap can be chosen, we further impose a partition matroid \(\{\mathcal{I}_{i}\}_{i\in [K]}\) on the items, where \(\mathcal{I}_{i} = \{(j, i, b)| \forall j, \forall b\}\).


\end{document}